\documentclass[10pt,twocolumn,twoside]{IEEEtran}
\usepackage{cite}
\usepackage{graphicx}
\usepackage{subcaption}
\usepackage[cmex10]{amsmath}
\usepackage{algpseudocode}
\usepackage{algorithmicx} 
\usepackage{array}
\usepackage{url}
\usepackage{caption}
\usepackage{epsfig}
\usepackage{amsfonts,amssymb,multirow,bigstrut,booktabs,ctable,latexsym}

\usepackage{cuted}
\usepackage{flushend}

\usepackage[mathscr]{eucal}
\usepackage{verbatim}
\usepackage{amsthm}
\usepackage{algorithm}

\begin{document}

\newtheorem{definition}{Definition}
\newtheorem{lemma}{Lemma}
\newtheorem{theorem}{Theorem}
\newtheorem{example}{Example}
\newtheorem{proposition}{Proposition}
\newtheorem{remark}{Remark}
\newtheorem{assumption}{Assumption}
\newtheorem{corrolary}{Corrolary}
\newtheorem{property}{Property}
\newtheorem{ex}{EX}
\newtheorem{problem}{Problem}
\newcommand{\argmin}{\arg\!\min}
\newcommand{\argmax}{\arg\!\max}
\newcommand{\st}{\text{s.t.}}
\newcommand \dd[1]  { \,\textrm d{#1}  }

\newcounter{mytempeqncnt}

\title{Distributed Safety-Critical Control of Multi-Agent Systems with Time-Varying \\Communication Topologies} 
\author{Shiyu Cheng, \IEEEmembership{Graduate Student Member, IEEE}, Luyao Niu, \IEEEmembership{Member, IEEE}, Bhaskar Ramasubramanian, \IEEEmembership{Member, IEEE}, Andrew Clark, \IEEEmembership{Senior Member, IEEE}, and Radha Poovendran, \IEEEmembership{Fellow, IEEE}
\thanks{S. Cheng and A. Clark are with the Department of Electrical and Systems Engineering, Washington University in St. Louis, St. Louis, MO, USA. Email: \{cheng.shiyu, andrewclark\}@wustl.edu}
\thanks{B. Ramasubramanian is with the Department of Electrical and Computer Engineering, Western Washington University,
Bellingham, WA, USA. Email: ramasub@wwu.edu}
\thanks{L. Niu and R. Poovendran are with the Network Security Lab, Department of Electrical and Computer Engineering, University of Washington, Seattle, WA, USA. Email: \{luyaoniu, rp3\}@uw.edu}
}

\maketitle
\begin{abstract}
Coordinating multiple autonomous agents to reach a target region while avoiding collisions and maintaining communication connectivity is a core problem in multi-agent systems. 
In practice, agents have a limited communication range.
Thus, network links can appear and disappear as agents move, making the topology state-dependent and time-varying. 
Existing distributed solutions to multi-agent reach-avoid problems typically assume a fixed communication topology, and thus are not applicable when encountering discontinuities raised by time-varying topologies.
This paper presents a distributed optimization-based control framework that addresses these challenges through two complementary mechanisms. First, we introduce a truncation function that converts the time-varying communication graph into a smoothly state-dependent one, ensuring that constraints remain continuous as communication links are created or removed. 
Second, we employ auxiliary mismatch variables with two-time-scale dynamics to decouple globally coupled state-dependent constraints, yielding a singular perturbation system that each agent can solve using only local information and neighbor communication. 
Through singular perturbation analysis, we prove that the distributed controller guarantees collision avoidance, connectivity preservation, and convergence to the target region. 
We validate the proposed framework through numerical simulations involving multi-agent navigation with obstacles and time-varying communication topologies.
\end{abstract}

\begin{IEEEkeywords}
Multi-agent systems, control barrier functions, distributed optimization, distributed control
\end{IEEEkeywords}

\section{Introduction}
In multi-agent systems, a central task is to coordinate multiple autonomous agents to reach a desired region while avoiding collisions with other agents and obstacles. 
This task, known as the \emph{reach-avoid problem}, arises in many applications, including cooperative robotics, autonomous vehicles, surveillance, and environmental monitoring \cite{zhang2025solving,liu2025controller,verhagen2025collaborative}.
In many of these applications, agents must not only ensure safety but also maintain connectivity of the communication network in order to enable information exchange and coordinated decision-making. Moreover, the sensing and communication capabilities of each agent are often limited by hardware constraints and environmental factors, making coordination under safety and connectivity requirements particularly challenging.

A common approach to solving reach-avoid problems is to formulate the control synthesis problem as an optimization problem.
In this framework, constraints such as collision avoidance, obstacle avoidance, and connectivity preservation can be encoded using control barrier functions (CBFs) \cite{ames2019control,xu2015robustness,capelli2020connectivity, wang2017safety}. Meanwhile, task objectives such as convergence to a desired region can be encoded using control Lyapunov functions (CLFs) \cite{sontag1983lyapunov}. By enforcing both CLF and CBF conditions within an optimization-based controller, these methods generate control inputs that drive the system toward the desired objective while guaranteeing safety constraints \cite{ames2016control,ames2014control,frauenfelder2023decentralized}.

Optimization-based controllers have been shown to provide strong guarantees of safety and task performance. However, such approaches usually rely on global state information or centralized coordination, which limits scalability as the number of agents increases.
To address these limitations, distributed implementations of optimization-based controllers have been proposed, in which each agent computes its control input using locally available information and communication with neighboring agents \cite{wang2017safety, gao2023online, wang2023distributed}.  
In these works, coupled constraints involving multiple agents are decomposed. Each agent enforces a local portion of the constraint using locally available information and information from its neighbors. Such distributed approaches improve scalability and make optimization-based control methods more suitable for large-scale multi-agent systems. However, this decomposition may introduce conservativeness, since the local constraints can be more restrictive and thus yield conservative control inputs.

Recent works in distributed optimization have addressed the challenge of constraint-coupled optimization problems by introducing auxiliary variables that decouple globally coupled constraints while preserving the feasible set of the original problem \cite{tan2021distributed,tan2025continuous, liu2026achieving, mestres2024distributed}.
Augmenting the optimization problem with these auxiliary variables allows coupled constraints to be reformulated as locally enforceable constraints that depend only on local decision variables and information exchanged with neighboring agents.
This reformulation enables distributed solution methods in which each agent solves a local optimization problem while coordinating with its neighbors to ensure global feasibility.
However, these methods are proposed based on the assumption that the inter-agent communication topology is fixed, and hence are not applicable to multi-agent systems with time-varying state-dependent communication topologies.
Designing distributed controllers that remain well-defined and stable under such dynamically evolving and state-dependent communication networks remains an open problem.

In this paper, we study the problem of steering a group of agents to a target region under a time-varying, state-dependent communication network, while maintaining safety and network connectivity.
To enable distributed control, we adopt a two-time-scale dynamical system to evolve mismatch variables that decouple coupled constraints.
To handle the addition and removal of communication links, we introduce a truncation function that enables smooth transitions. This design helps maintain well-defined mismatch variables and avoids discontinuities induced by changes in the communication topology.
The main contributions of this paper are summarized as follows.
\begin{itemize}
    \item We design a controller to steer a multi-agent system to a target region while ensuring collision avoidance and connectivity preservation under a state-dependent and time-varying communication topology.
    \item We develop a distributed optimization-based control framework that employs auxiliary mismatch variables with two-time-scale dynamics, resulting in a singular perturbation system.
    \item We introduce a truncation function to handle the appearance and disappearance of communication links, ensuring the consistency and well-definedness of mismatch variables under topology changes.
    \item We provide theoretical guarantees of convergence to the target region while maintaining safety and connectivity of the communication graph through the analysis of the associated singular perturbation system.
    \item We validate the proposed distributed control framework through numerical simulations on a multi-agent reach-avoid scenario with time-varying communication topologies, demonstrating its effectiveness.
\end{itemize}
The rest of this paper is organized as follows. Section \ref{section:related-work} reviews related work. Section \ref{section:preliminaries} introduces the preliminaries and system model. Section \ref{section:problem-formulation} presents the problem formulation and a centralized controller. Section \ref{section:distributed-algo} develops the distributed algorithm. Section \ref{section:simulation} provides simulation and experimental results. Section \ref{section:conclusion} concludes the paper.
\section{Related Work}
\label{section:related-work}
Distributed coordination of multi-agent systems has been extensively studied due to its wide range of applications in robotics, autonomous vehicles, and networked control systems \cite{oh2015survey}. In many practical scenarios, agents are required to accomplish task objectives while ensuring collision avoidance and communication connectivity preservation.
Classical approaches to multi-agent coordination include consensus-based and graph-theoretic methods, where agents achieve agreement or formation objectives through local interactions over a communication graph \cite{fax2004information, ren2004coordination}. In addition, artificial potential field methods have been widely used to enable decentralized navigation and collision avoidance, although they generally lack formal safety guarantees \cite{980728}. To address these limitations, recent works have incorporated safety requirements, and in some cases connectivity requirements, as constraints within optimization-based control frameworks.
Specifically, CLFs have been used to ensure convergence to a target region \cite{frauenfelder2023decentralized}, while CBFs provide a systematic way to encode safety constraints for both collision avoidance \cite{ames2019control,xu2015robustness, wang2017safety} and communication connectivity preservation \cite{capelli2020connectivity, de2024distributed}. 

In parallel, a large body of work has focused on developing distributed control and optimization algorithms that rely only on local interactions among neighboring agents, enabling scalable coordination while preserving the decentralized structure of the system \cite{wang2017safety, gao2023online, wang2023distributed,mestres2024distributed,tan2025continuous}. These approaches allow agents to compute control inputs using local information and communication with nearby agents while preserving safety and task guarantees.
For example, \cite{wang2017safety} proposed a decentralized implementation in which a centralized quadratic program (QP) was decomposed into a set of local QPs that could be efficiently solved by individual agents while meeting constraints of the centralized problem.
However, this approach does not in general provide optimality guarantees with respect to the original centralized optimization problem. 

More recent works \cite{tan2021distributed,tan2025continuous, liu2026achieving, mestres2023distributed, mestres2024distributed, cherukuri2017role} have developed distributed algorithms to solve constrained optimization problems while ensuring that constraints remain satisfied throughout execution of the algorithm. These approaches typically introduce mismatch or auxiliary variables to decouple any coupled constraints among agents and construct local optimization problems that can be solved independently. 
Solving the resulting separable distributed optimization problem was then shown to be equivalent to solving the original centralized optimization problem. 
In \cite{mestres2024distributed}, this approach is implemented in a feedback loop with the plant.    
A mismatch-variable formulation is used to decompose the coupled constraints, and a two-time-scale dynamical system is proposed to update the mismatch variables. 

The distributed framework adopted in our paper builds upon this class of methods. However, most existing results focus on scenarios with fixed communication topologies \cite{tan2025continuous, mestres2024distributed}. In contrast, we consider a different problem setting where the communication graph depends on the relative positions of agents and is therefore time-varying. 
Under state-dependent communication graphs, existing mismatch-variable-based frameworks encounter additional challenges. In particular, when communication links are created or removed, the mismatch-variable dynamics may become discontinuous, and the dimension of locally available variables may change over time. These issues complicate the implementation of distributed optimization dynamics.

\section{Preliminaries}
\label{section:preliminaries}
\subsection{Notations}
We use $\mathbb{Z}_{+}$, $\mathbb{R}_{+}$, and $\mathbb{R}_{\geq 0}$ to denote the sets of positive integers, positive real numbers, and nonnegative real numbers, respectively. We define $[N] = \{1, \ldots, N\}$ with $N\in \mathbb{Z}_{+}$. We use $\mathfrak{B}(\mathbb{R}^{n})$ to denote the collection of all subsets of $\mathbb{R}^{n}$. Let $\mathcal{S}_{++}^{d}$ denote the set of $d\times d$ symmetric positive definite matrices where $d\in \mathbb{Z}_{+}$. 
We use $f(x; y)$ to define a function of $x$, parameterized by $y$. For a real-valued function $f: \mathbb{R}^n\times \mathbb{R}^m\rightarrow \mathbb{R}, (x, y)\mapsto f(x,y)$, we use $\nabla_{x}f$ and $\nabla_{y}f$ to denote the column vector of partial derivatives of $f$ with respect to $x$ and $y$, respectively. For a function $g: \mathbb{R}\rightarrow \mathbb{R}^n$, the notation $\lim_{t\uparrow t^{\prime}}g(t)$ denotes the left-hand limits of $g$ at $t^{\prime}$, i.e., $\lim_{t\uparrow t^{\prime}}g(t) = \lim_{t\rightarrow (t^{\prime})^{-}}g(t)$.
For vector $v$, we use $v^{(i)}$ to denote the $i$-th entry in $v$.
For vectors $v_1, \ldots, v_k$, we define the stacking operator
$\text{col}(v_1, \ldots, v_k) \triangleq
[(v_1)^T, \ldots, (v_k)^T]^T$. For an indexed family
$\{v_\alpha\}_{\alpha \in \mathcal{S}}$ over a finite
ordered set $\mathcal{S}$, we write
$\text{col}(\{v_\alpha\}_{\alpha \in \mathcal{S}})$ to
denote stacking according to that order.
Let $I^m$ denote the $m\times m$ identity matrix, $0^{m\times n}$ denote the zero matrix of dimensions $m\times n$, $\mathbf{1}^m$ denote the $m$- dimensional vector of ones, and $\mathbf{0}^m$ denote the $m$-dimensional vector of zeros. 
For $x\in \mathbb{R}^n$, we use $\|x\|$ to denote its Euclidean norm. We use $B_r$ to denote the open ball of radius $r$ centered at the origin, and $\overline{B}_r$ to denote the corresponding closed ball. 
For a set $C\subseteq \mathbb{R}^n$ and ${x}\in \mathbb{R}^n$, we use $d_{C}({x}) \triangleq \inf \{\|{x}-{y}\|:{y}\in C\}$ to represent the distance between ${x}$ and set $C$. We define $C+\overline{B}_r = \{x\in \mathbb{R}^n: d_{C}(x)\leq r\}$. The notation $|C|$ is used to denote the cardinality of the finite set $C$. We use $\text{Proj}_{n}$ to denote the projection onto $\mathbb{R}^{n}$.
For $a \in \mathbb{R}$ and $b\in \mathbb{R}_{+}$, we set 
\begin{align*}
    [a]_b^{+} = \left\{
    \begin{array}{ll}
    a, & b>0\\
    \max\{0, a\}, & b=0 
    \end{array}\right.
\end{align*}
For $\mathbf{a} \in \mathbb{R}^n$ and $\mathbf{b}\in \mathbb{R}_{+}^n$, $[\mathbf{a}]_\mathbf{b}^{+}\in \mathbb{R}^n$ denotes the vector obtained by applying $[a^{(i)}]_{b^{(i)}}^{+}$ component-wise for $i\in [n]$.

\subsection{System Model}
Consider a multi-agent system with $N$ agents. We use $x_i(t) = \text{col}(p_i(t), q_i(t))  \in \mathcal{X}\subset \mathbb{R}^{n}$ to denote the state of agent $i$ at time $t$, where $p_i(t)\in \mathbb{R}^l$ denotes the position of agent $i$ and $q_i(t)\in \mathbb{R}^{n-l}$ denotes the other states, such as velocity and acceleration, with $l\in \{2, 3\}$ and $n-l\geq 0$.
Each agent $i$ follows the control-affine system dynamics
\begin{align}
\label{eq:control-affine}
    \dot{x}_i(t) = F_i(x_i(t))+G_i(x_i(t))u_i(t),
\end{align}
where $F_i: \mathbb{R}^{n}\rightarrow \mathbb{R}^{n}$ and $G_i: \mathbb{R}^{n}\rightarrow \mathbb{R}^{n\times m}$ are locally Lipschitz functions, with $u_i\in \mathcal{U}\subset \mathbb{R}^{m}$. We denote $\mathbf{x}(t) = \text{col}(x_1(t), \ldots, x_N(t))$ and $\mathbf{u}(t) = \text{col}(u_1(t), \ldots, u_N(t))$.
Let $u_i^{\ast}(\mathbf{x}(t))$ be a feedback controller such that the dynamical system
\begin{align}
\label{eq:feedback-control}
    \dot{x}_i(t) = F_i(x_i(t))+G_i(x_i(t))u_i^{\ast}(\mathbf{x}(t))
\end{align}
is locally Lipschitz. For any initial condition $x_i(t_0)\in \mathcal{X}$, there exists a maximum interval of existence $I(x_i(t_0)) = [t_0, T_{max})$, such that $x_i(t)$ is the unique solution to \eqref{eq:feedback-control} on $I(x_i(t_0))$. If the system \eqref{eq:feedback-control} is forward complete \cite{khalil2015nonlinear}, then $T_{\max} = \infty$.

We assume that two agents can exchange information when the distance between their positions is less than $d_c\in\mathbb{R}_{+}$.
We use an undirected graph $\mathcal{G}(t) = (\mathcal{V}, \mathcal{E}(\mathbf{x}(t)))$ to denote the communication graph of the system, where $\mathcal{V} = [N]$ denotes the set of agents in the system, and $\mathcal{E}(\mathbf{x}(t)) \triangleq \{(i,j)\in \mathcal{V}\times \mathcal{V}: \|p_i(t)-p_j(t)\|< d_c, i\neq j\}$ denotes the set of communication edges at time $t$. We define the set of neighbors of agent $i$ at time $t$ as $\mathcal{N}_i(\mathbf{x}(t)) \triangleq \{j\in \mathcal{V}: \|p_i(t)-p_j(t)\|< d_c, j\neq i\}$. 

\subsection{Background on Control Barrier Functions}
We present the background of control barrier functions (CBFs) as follows \cite{ames2019control}. 
Consider a set $\mathcal{C}$ defined as the superlevel set of a continuously differentiable function $h:\mathcal{X}\subset \mathbb{R}^n\rightarrow \mathbb{R}$ as 
\begin{align}
\label{eq:safe-set}
    \mathcal{C} &= \{x\in \mathcal{X}\subset\mathbb{R}^n: h(x)\geq0\}\\
    \partial \mathcal{C} &= \{x\in \mathcal{X}\subset \mathbb{R}^n: h(x)= 0\} 
    \nonumber
\end{align}

\begin{definition}[Safety]
    The set $\mathcal{C}$ is forward invariant under feedback control law \eqref{eq:feedback-control} if for every $x(t_0)\in \mathcal{C}$, $x(t)\in \mathcal{C}$ for all $t\in I(x(t_0))$. The system \eqref{eq:feedback-control} is safe with respect to the set $\mathcal{C}$ if $\mathcal{C}$ is forward invariant.
\end{definition}

    A function $\tilde{f}: \mathbb{R}\rightarrow\mathbb{R}$ is in extended class $\mathcal{K}_{\infty}$ if $\tilde{f}$ is strictly increasing,  $\tilde{f}(0)=0$, and $\lim_{t\rightarrow\infty}\tilde{f}(t)=\infty$. Next, we will introduce the definition of CBFs.

\begin{definition}[Control Barrier Function \cite{ames2019control}]
    Let $\mathcal{C}\subset \mathbb{R}^n$ be defined as in \eqref{eq:safe-set}, then $h(x)$ is a control barrier function (CBF) if there exists an extended class $\mathcal{K}_{\infty}$ function $\alpha$ such that for the control system \eqref{eq:control-affine}:
    \begin{align}
    \label{eq:def-cbf}
        \sup_{u\in \mathcal{U}}[L_Fh(x)+L_Gh(x)u]\geq -\alpha(h(x)),
    \end{align}
for all $x\in \mathcal{X}$, where $L_Fh(x) = \frac{\partial h(x)}{\partial x}F(x)$ and $L_Gh(x) = \frac{\partial h(x)}{\partial x}G(x)$ denote Lie derivatives.
\end{definition}
The set of control inputs that satisfies \eqref{eq:def-cbf} is denoted as
$$K_{\text{cbf}}(x)=\{u\in \mathcal{U}: L_Fh(x)+L_Gh(x)u+\alpha(h(x))\geq 0\}.$$
\begin{theorem}[\cite{ames2019control}]
\label{theorem:cbf}
    Let $\mathcal{C}\subset \mathbb{R}^n$ be a set defined as the superlevel set of a continuously differentiable function $h: \mathcal{X}\subset \mathbb{R}^n\rightarrow \mathbb{R}$. If $h$ is a CBF on $\mathcal{X}$ and $\frac{\partial h}{\partial x}(x)\neq 0$ for all $x\in \partial \mathcal{C}$, then any locally Lipschitz continuous controller
    $u: \mathcal{X}\rightarrow \mathcal{U}$ satisfying $u(x)\in K_{\text{cbf}}(x)$, $\forall x\in \mathcal{X}$ for the system \eqref{eq:control-affine} renders the set $\mathcal{C}$ safe.
\end{theorem}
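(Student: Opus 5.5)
The plan is to fix an initial condition $x(t_0)\in\mathcal{C}$ and follow the closed-loop trajectory $x(t)$ on its maximal interval $I(x(t_0))$. The function $t\mapsto h(x(t))$ is then shown to obey a scalar differential inequality, and a comparison argument keeps it nonnegative. The setup is standard. Since $u$ is locally Lipschitz and $F,G$ are locally Lipschitz, the closed-loop vector field $F(x)+G(x)u(x)$ is locally Lipschitz. A unique $C^1$ solution therefore exists on $I(x(t_0))$. Because $h$ is continuously differentiable, $\phi(t)\triangleq h(x(t))$ is $C^1$ with
\begin{align*}
\dot\phi(t)=L_Fh(x(t))+L_Gh(x(t))u(x(t))\geq -\alpha(\phi(t)),
\end{align*}
where the inequality uses $u(x)\in K_{\text{cbf}}(x)$ for every $x\in\mathcal{X}$.

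The main step is to deduce $\phi(t)\geq 0$ for all $t\in I(x(t_0))$ from $\phi(t_0)\geq 0$. I would first try a direct contradiction argument. Suppose $\phi(t_1)<0$ for some $t_1$, and let $\tau=\sup\{t\in[t_0,t_1]:\phi(t)\geq 0\}$. Then $\phi(\tau)=0$ and $\phi<0$ on $(\tau,t_1]$.

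On that interval $\alpha(\phi)<0$, so $\dot\phi\geq-\alpha(\phi)>0$. Hence $\phi$ is strictly increasing on $(\tau,t_1]$, and by continuity $\phi(t_1)>\phi(\tau)=0$. This contradicts $\phi(t_1)<0$.

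This elementary argument uses only that $\alpha$ is strictly increasing with $\alpha(0)=0$, so $\alpha(s)<0$ for $s<0$. It therefore sidesteps any need for local Lipschitzness of $\alpha$, which the general comparison lemma would otherwise require. Alternatively, one can invoke the comparison lemma \cite{khalil2015nonlinear} with the scalar system $\dot y=-\alpha(y)$, $y(t_0)=\phi(t_0)\geq0$, whose solution stays nonnegative. That route needs uniqueness, hence Lipschitz $\alpha$, so I would prefer the direct argument.

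The remaining point to address is the role of the hypothesis $\frac{\partial h}{\partial x}\neq 0$ on $\partial\mathcal{C}$. In the argument above it is not strictly needed for forward invariance. It guarantees that $\partial\mathcal{C}$ is a regular level set, so that $\mathcal{C}$ is a well-behaved closed set with nonempty interior near the boundary, consistent with the Nagumo-type interpretation in \cite{ames2019control}. I would remark on this rather than use it, or alternatively present the proof via Nagumo's theorem: on $\partial\mathcal{C}$ the condition gives $\dot h\geq-\alpha(0)=0$, so the vector field points into the tangent cone, which is characterized by $\nabla h$ precisely because the gradient is nonzero there. Either route concludes that $x(t)\in\mathcal{C}$ for all $t\in I(x(t_0))$, so $\mathcal{C}$ is forward invariant and the system is safe.
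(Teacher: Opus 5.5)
Your proof is correct, but it follows a different route from the one behind the cited result. The paper gives no proof of its own; the theorem is quoted from \cite{ames2019control}, where the standard argument uses Nagumo's theorem. On $\partial\mathcal{C}$ the CBF inequality gives $\dot h\geq-\alpha(0)=0$. The hypothesis $\frac{\partial h}{\partial x}\neq 0$ on $\partial\mathcal{C}$ is exactly what identifies the tangent cone of $\mathcal{C}$ at a boundary point with the half-space $\{v:\frac{\partial h}{\partial x}(x)v\geq 0\}$. Uniqueness of solutions, from the locally Lipschitz closed loop, then turns ``the vector field points into the tangent cone'' into forward invariance.

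Your contradiction argument on $\phi(t)=h(x(t))$ avoids tangent cones and the comparison lemma entirely. It needs neither continuity nor Lipschitzness of $\alpha$, only $\alpha(s)<0$ for $s<0$, and it correctly shows that the regularity hypothesis is not needed for forward invariance.

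The two routes trade hypotheses. The Nagumo argument uses the CBF inequality only on $\partial\mathcal{C}$ and pays for this with the gradient condition. Yours uses the strict inequality $\dot\phi\geq-\alpha(\phi)>0$ at points where $h<0$, i.e.\ outside $\mathcal{C}$. That is available here because $u(x)\in K_{\text{cbf}}(x)$ is assumed on all of $\mathcal{X}$, and the trajectory remains in $\mathcal{X}$ on $I(x(t_0))$.
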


\subsection{Background on Singular Perturbations}
Consider a differential inclusion
\begin{align}
\label{eq:diff-inclusion}
    \begin{pmatrix}
        \dot{x}(t)\\\tau \dot{y}(t)
    \end{pmatrix}    \in J(x(t), y(t))
\end{align}
where $\tau>0$, $(x, y)\in \mathbb{R}^{n_x}\times \mathbb{R}^{n_y}$, $t\geq 0$, and $J: \mathbb{R}^{n_x}\times \mathbb{R}^{n_y}\rightarrow \mathfrak{B}(\mathbb{R}^{n_x+n_y})$.
For a given $\tau$, a solution of \eqref{eq:diff-inclusion} on an interval $\overline{I}$ is a pair $(x_{\tau}, y_{\tau})$ of absolutely continuous functions such that
$$\begin{pmatrix}
        \dot{x}_{\tau}(t)\\\tau \dot{y}_{\tau}(t)
    \end{pmatrix}\in J({x}_{\tau}(t),  {y}_{\tau}(t)), \text{for almost every } t\in \overline{I}.$$
We define $S_{\tau}(x_0, y_0)$ as the set of all the solutions of \eqref{eq:diff-inclusion} with $x(0)=x_0$ and $y(0)=y_0$. 
We assume that solutions to \eqref{eq:diff-inclusion} exist for all initial conditions for all time. When $\tau$ is small, we call $x$ the slow variable and $y$ the fast variable.
For a fixed $x$, 
\begin{align}
\label{eq:fast-system}
    \tau\dot{y}(t) \in \text{Proj}_{n_y}J(x, y(t)).
\end{align}
For a given $x\in \mathbb{R}^{n_x}$, suppose that there exists a Lipschitz set-valued map $\mathcal{H}: \mathbb{R}^{n_x}\rightarrow \mathfrak{B}(\mathbb{R}^{n_y})$, with compact convex values, such that the set $\mathcal{H}(x)$ is asymptotically stable\footnote{The set $\mathcal{H}(x)$ is asymptotically stable if  $\overline{\delta}>0$ can be chosen such that $\|y(0)\|<\overline{\delta}$ implies $\lim_{t\rightarrow \infty}d_{\mathcal{H}(x)}(y(t))=0$.} with respect to the system \eqref{eq:fast-system}. We define 
\begin{align}
    \label{eq:slow-system}
    D(x) = \overline{co}(\text{Proj}_{n_x}J(x, \mathcal{H}(x)))
\end{align}
where $\overline{co}(X)$ denotes the closed convex hull of set $X$. The set-valued map $D(x)$ defines the reduced slow system, given by the differential inclusion $\dot{x}\in D(x)$.
Let $S_D(x_0)$ denote the set of solutions of the system  $\dot{x}\in D(x)$ starting at $x_0$.
We define uniform global asymptotic stability (UGAS) as follows.
\begin{definition}[UGAS \cite{watbled2005singular}]
    \label{def:ugas}
    A set $A\subseteq \mathbb{R}^{n_x}$ is UGAS for the 
    differential inclusion $\dot{x} \in D(x)$ if there exist a nondecreasing  
    function $\psi: \mathbb{R}_{+}\rightarrow\mathbb{R}_{+}$ with $\lim_{r\rightarrow 0}\psi(r)=0$ and a function $\mathcal{T}: \mathbb{R}_{+}\times 
    \mathbb{R}_{+}\rightarrow\mathbb{R}_{+}$ such 
    that for any $R>0$, any $\nu>0$, and any solution 
    $x(\cdot)$ with $x(0)\in A+\overline{B}_R$, the 
    following hold:
    \begin{enumerate}
        \item $\lim_{t\rightarrow \infty}d_{A}(x(t))=0$,
        \item $d_{A}(x(t))\leq \psi(R)$ for all $t\geq 0$,
        \item $d_{A}(x(t))\leq \nu$ for all 
              $t\geq \mathcal{T}(R, \nu)$.
    \end{enumerate}
\end{definition}
Next, we introduce the assumptions required to ensure the convergence of trajectories for a differential inclusion system. 

\textbf{HF}: The map $J:\mathbb{R}^{n_x+n_y}\rightarrow \mathfrak{B}(\mathbb{R}^{n_x+n_y})$ has nonempty convex compact values, and is L-Lipschitz on $\mathbb{R}^{n_x}\times \mathbb{R}^{n_y}$, i.e., there exists $L>0$ such that, for any $(x,y)\in \mathbb{R}^{n_x}\times \mathbb{R}^{n_y}$, we have 
\begin{align*}
    \sup_{(u,v)\in J(x,y)} \{\|u\|+\|v\|\}\leq L(\|x\|+\|y\|+1).
\end{align*}

\textbf{H1}: The map $\mathcal{H}: \mathbb{R}^{n_x}\rightarrow \mathfrak{B}(\mathbb{R}^{n_y})$ is Lipschitz on $\mathbb{R}^{n_x}$ with a Lipschitz constant $L_K$, i.e., 
    $\inf_{u_1\in \mathcal{H}(x_1), u_2\in \mathcal{H}(x_2)} \|u_1-u_2\|\leq L_K \|x_1-x_2\|$.
For every $x\in \mathbb{R}^{n_x}$, the set $\mathcal{H}(x)$ is a compact convex subset of $\mathbb{R}^{n_y}$.

\textbf{H2}: There exists a nonempty compact subset $A$ of $\mathbb{R}^{n_x}$ which is UGAS for the differential inclusion $\dot{x}\in D(x)$, with $\psi$ and $T$ as defined in Definition \ref{def:ugas}.

\textbf{H3}: Every set $\mathcal{H}(x)$ is UGAS for the differential inclusion \eqref{eq:diff-inclusion}, with functions $\psi_R$ and $T_R$ as in Definition \ref{def:ugas}, being such that: $\forall R>0$, $\forall x\in \overline{B}_R$, $\forall r>0$, $\forall y\in \mathcal{H}(x)+\overline{B}_r$, $\forall \nu >0$, every trajectory $y(\cdot)\in S_x(y)$ satisfies:
\begin{enumerate}
    \item $\lim_{t\rightarrow + \infty}d_{\mathcal{H}(x)}(y(t))= 0$
    \item $d_{\mathcal{H}(x)}(y(t))\leq \psi_R(r), \ \forall t\geq 0$
    \item $d_{\mathcal{H}(x)}(y(t))\leq \nu, \ \forall t\geq T_R(r, \nu)$
\end{enumerate}
where $S_x(y)$ denotes the set of all the solutions with initial condition $y(0)=y$, i.e., 
$S_x(y) = \{y(\cdot):\tau \dot{y}(s)\in \text{Proj}_{n_y}J(x, y(s)) \text{ almost everywhere, } y(0)=y\}$.

Next, we present results on the convergence of trajectories of \eqref{eq:diff-inclusion}.
\begin{theorem}[\cite{watbled2005singular}]
\label{:theorem-converge}
    Let assumptions \textbf{HF}, \textbf{H1}, \textbf{H2}, and \textbf{H3} hold. Then, for any $x\in \mathbb{R}^{n_x}$, there exists $r>0$ such that, for any $y\in \mathcal{H}(x)+B_r$, and for every sequence $(x_{\tau}(\cdot), y_{\tau}(\cdot))\in S_{\tau}(x, y)$, there exists a sequence  $(\overline{x}_{\tau}(\cdot))\in S_D(x)$ such that 
    \begin{align}
        \forall t_1>0, &\lim_{\tau \rightarrow 0}\sup_{t\geq t_1}d_{\mathcal{H}(x_{\tau}(t))}(y_{\tau}(t))=0 \label{eq:converge-result-1}\\
        \forall T>0, &\lim_{\tau \rightarrow 0}\sup_{t\in [0, T]}||x_{\tau}(t)-\overline{x}_{\tau}(t)||=0\label{eq:converge-result-2}
    \end{align}
Moreover, for any $\eta>0$, there exists $T_{\eta}>0$ with the following property. For each $T>T_{\eta}$, there exists $\tau_0>0$ such that for all $\tau\in (0, \tau_0)$, we have
\begin{align}
\label{eq:converge-result-3}
    \sup_{t\in [0, T]}||x_{\tau}(t)-\overline{x}_{\tau}(t)||&+\sup_{t\geq T}d_A(x_{\tau}(t))\nonumber\\
    &\quad\quad\quad+\sup_{t\geq T}d_A(\overline{x}_{\tau}(t))\leq \eta.
\end{align}
\end{theorem}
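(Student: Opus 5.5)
This statement is the singular perturbation theorem of \cite{watbled2005singular}, so the plan is to follow a Tikhonov-type argument adapted to differential inclusions. We compare the full system \eqref{eq:diff-inclusion} against the reduced inclusion $\dot{x}\in D(x)$ on compact time windows. We then use the UGAS properties in \textbf{H2} and \textbf{H3} to pass from finite-horizon closeness to uniform-in-time bounds.

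\textbf{Step 1 (boundary layer).} By \textbf{HF}, solutions grow at most linearly. Hence on any $[0,T]$ the slow component $x_\tau$ stays in a ball $\overline{B}_R$, with $R$ independent of $\tau$, and it is $L(R+\cdot)$-Lipschitz in time. We rescale time by $s=t/\tau$. On the fast scale, $x_\tau$ is then nearly frozen over intervals of length $O(\tau)$ in $t$, and $y_\tau$ follows approximately the fast inclusion \eqref{eq:fast-system} with $x$ fixed.
\begin{itemize}
\item \textbf{H3} gives uniform convergence bounds $\psi_R$ and $T_R$ for the frozen system.
\item \textbf{H1}, Lipschitz continuity of $\mathcal{H}$, controls the drift of the target set as $x$ moves.
\end{itemize}
Combining these with a Gronwall/Filippov comparison between the actual fast trajectory and the frozen one, started at $y\in\mathcal{H}(x)+B_r$, yields $d_{\mathcal{H}(x_\tau(t))}(y_\tau(t))\le \psi_R(r)+o(1)$ for all $t$. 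After the time $\tau T_R(r,\nu)$, which tends to $0$, this distance is at most $\nu+o(1)$. This gives \eqref{eq:converge-result-1}.

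\textbf{Step 2 (slow approximation on $[0,T]$).} Once $y_\tau(t)$ is close to $\mathcal{H}(x_\tau(t))$, the velocity $\dot{x}_\tau(t)$ lies within $\epsilon(\tau)\to0$ of $\mathrm{Proj}_{n_x}J(x_\tau,\mathcal{H}(x_\tau))\subset D(x_\tau)$. This uses the Lipschitz property of $J$ from \textbf{HF}. So $x_\tau$ is an $\epsilon$-approximate solution of $\dot{x}\in D(x)$, where $D$ is Lipschitz with convex compact values by \textbf{HF}--\textbf{H1}. The Filippov--Ważewski relaxation theorem then provides a true solution $\overline{x}_\tau\in S_D(x)$ with $\sup_{[0,T]}\|x_\tau-\overline{x}_\tau\|\le e^{LT}\epsilon(\tau)T$. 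This gives \eqref{eq:converge-result-2}.

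\textbf{Step 3 (uniform-in-time bound, the main obstacle).} Finite-horizon closeness does not by itself control $\sup_{t\ge T}d_A(x_\tau(t))$, because errors may accumulate over infinite time. I would use the UGAS functions from \textbf{H2} and proceed as follows.
\begin{enumerate}
\item Given $\eta$, choose $\nu$ with $\psi(2\nu)\le\eta/3$, and set $T_\eta=\mathcal{T}(R_0,\nu)$, where $R_0$ bounds $d_A(x)$.
\item At time $T$, $\overline{x}_\tau(T)$ is $\nu$-close to $A$, and hence $x_\tau(T)$ is $2\nu$-close for small $\tau$.
\item Restart the argument on successive windows $[kT,(k+1)T]$. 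On each window, compare $x_\tau$ with a fresh reduced solution started from $x_\tau(kT)\in A+\overline{B}_{2\nu}$. Property 2 of Definition~\ref{def:ugas} keeps that reduced solution within $\psi(2\nu)$ of $A$, and property 3 brings it back within $\nu$ by the end of the window.
\item The fast variable stays near $\mathcal{H}$ at each restart by Step 1 and the compactness of $A+\overline{B}_{1}$. Therefore the same $\tau_0$ works on all windows, and induction keeps $d_A(x_\tau(t))\le\eta$ for all $t\ge T$.
\end{enumerate}
The delicate points are the choice of constants, so that the window error never exceeds the recovery margin $\nu$, and the uniformity of $\tau_0$ across windows. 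Together with Step 2, these yield \eqref{eq:converge-result-3}.
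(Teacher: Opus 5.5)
The paper does not prove this theorem: it quotes it from \cite{watbled2005singular} and uses it as a black box. There is no in-paper argument to compare with, so your proposal has to stand on its own. Its architecture is the standard route to infinite-horizon Tikhonov-type results: a frozen-$x$ boundary-layer estimate, Filippov's approximation theorem for $\dot{x}\in D(x)$ on compact windows, and a window-restart induction driven by the UGAS functions of \textbf{H2}. Your Step~3 induction is sound once the finite-horizon estimates are uniform over initial data in a fixed compact set.

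The genuine gap is the opening claim of Step~1. \textbf{HF} bounds $\|\dot{x}\|+\tau\|\dot{y}\|$, not $\|\dot{x}\|+\|\dot{y}\|$, by $L(\|x\|+\|y\|+1)$. So $\|\dot{y}_\tau\|$ may be of order $1/\tau$, and Gronwall only yields growth like $e^{L(1+1/\tau)t}$; the example $\tau\dot{y}=Ly$ satisfies \textbf{HF}. Since $\|\dot{x}_\tau\|$ depends on $\|y_\tau\|$, \textbf{HF} alone neither keeps $x_\tau$ in a $\tau$-independent ball on $[0,T]$ nor makes it uniformly Lipschitz in time. Yet that bound is exactly what you use to choose $\psi_R,T_R$ in \textbf{H3} and to run the frozen-$x$ comparison. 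So Step~1 as written is circular: boundedness of $y_\tau$ is what the fast analysis must produce, not an input to it.

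You need a continuation argument instead.
\begin{enumerate}
\item Fix $R$ first, naturally from \textbf{H2}, so that $A+\overline{B}_{\psi(d_A(x))}\subset B_{R-1}$. Let $\sigma_\tau$ be the first exit time of $x_\tau$ from $B_R$.
\item On $[0,\sigma_\tau)$, compare $y_\tau$ with frozen fast trajectories on fast-time windows of length $T_R(r,\nu)$ and then $T_R(2\nu,\nu)$. This gives $d_{\mathcal{H}(x_\tau)}(y_\tau)\le\psi_R(r)+1$ during the layer and $\le\psi_R(2\nu)+O(\tau)$ afterwards.
\item These bounds control $\|y_\tau\|$, hence $\|\dot{x}_\tau\|$, by constants depending only on $R$.
\item Step~2 then keeps $x_\tau$ within $o(1)$ of a reduced solution that stays in $B_{R-1}$, which forces $\sigma_\tau>T$. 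Your window induction then forces $\sigma_\tau=\infty$.
\end{enumerate}
This ordering also matters for \eqref{eq:converge-result-1}. Its supremum runs over all $t\ge t_1$, so it does not follow from Step~1, whose constants depend on $T$. It follows only after Step~3 has confined $x_\tau$ to $B_R$ for all time.

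Two smaller points. First, the ``drift of the target set'' estimate in Step~1 and the Lipschitz continuity of $D$ used in Step~2 both require $\mathcal{H}$ to be Lipschitz in the Hausdorff sense. That is stronger than the infimum form of \textbf{H1} written in the paper; this is harmless in the paper's application only because $\mathcal{H}$ is a singleton there. Second, the result you invoke in Step~2 is Filippov's approximation theorem, not the Filippov--Wa\.{z}ewski relaxation theorem.
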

Equation~\eqref{eq:converge-result-1} implies that $\forall t_1>0$, $\exists \tau>0$ such that fast variable $y_{\tau}(t)$ converges to the set $\mathcal{H}(x_{\tau}(t))$ within any prescribed time $t_1$. 
Moreover, for sufficiently small $\tau$, the trajectory of the slow variable $x_{\tau}$ remains close to some trajectory $\overline{x}_{\tau}\in S_D(x)$ uniformly on any compact time interval, as shown in \eqref{eq:converge-result-2}. Finally, \eqref{eq:converge-result-3} implies that for any $\eta>0$ there exists $T_{\eta}$ such that for all $T\geq T_{\eta}$, by choosing $\tau$ sufficiently small, the sum of (i) the distance between $x_{\tau}(t)$ and $\overline{x}_{\tau}(t)$, (ii) the distance between $x_{\tau}(t)$ and the target set $A$, and (iii) the distance between $\overline{x}_{\tau}(t)$ and $A$ is bounded above by $\eta$.
\subsection{Background on Projected Saddle-Point Dynamics}
\label{section:background-PSD}
Let $f:\mathbb{R}^n \rightarrow \mathbb{R}$ and $g: \mathbb{R}^n \rightarrow \mathbb{R}^{p}$ be continuously differentiable functions, whose derivatives are locally Lipschitz,  and let $A\in \mathbb{R}^{m\times n}$ and $b\in \mathbb{R}^m$. For the optimization problem
\begin{align}
\label{eq:nonlinear-opt}
    \min_{\tilde{x}\in \mathbb{R}^n}&\quad f(\tilde{x})\\
        \text{subject to} 
        & \quad g(\tilde{x})\leq 0 \nonumber\\
        & \quad A\tilde{x}=b \nonumber
\end{align}
let $\mathcal{L}: \mathbb{R}^n\times \mathbb{R}^p_{\geq 0}\times \mathbb{R}^m\rightarrow \mathbb{R}$ be defined as $\mathcal{L}(\tilde{x}, \tilde{y}, \tilde{\mu})\triangleq f(\tilde{x})+\tilde{y}^Tg(\tilde{x})+\tilde{\mu}^T(A\tilde{x}-b)$. We use $\text{Saddle}(\mathcal{L})$ to denote the set of saddle points of $\mathcal{L}$. The projected saddle-point dynamics for $\mathcal{L}$ are defined as 
\begin{subequations}
\label{eq:projected-saddle-dynamics}
\begin{align}
    \dot{\tilde{x}} &= -\nabla_{\tilde{x}} \mathcal{L}(\tilde{x}, \tilde{y}, \tilde{\mu}),\\
    \dot{\tilde{y}} &= [\nabla_{\tilde{y}}\mathcal{L}(\tilde{x}, \tilde{y}, \tilde{\mu})]_{\tilde{y}}^{+},\\\dot{\tilde{\mu}} &= \nabla_{\tilde{\mu}} \mathcal{L}(\tilde{x}, \tilde{y}, \tilde{\mu}).
\end{align}
\end{subequations}
If $f(\tilde{x})$ is strictly convex and $g(\tilde{x})$ is convex, then $\mathcal{L}$ has a unique saddle point, which corresponds to the KKT point of \eqref{eq:nonlinear-opt}.  Furthermore, this point is globally asymptotically stable under the dynamics \eqref{eq:projected-saddle-dynamics}. The following theorem, restated from \cite[Theorem~5.1]{cherukuri2017role}, formalizes this result.
\begin{theorem}[{\cite{cherukuri2017role}}]
\label{theorem:5-1}
    Suppose that $f(\tilde{x})$ is strictly convex and twice continuously differentiable, and that $g(\tilde{x})$ is convex and twice continuously differentiable. For each $(\tilde{x}, \tilde{y}, \tilde{\mu})\in \mathbb{R}^n\times \mathbb{R}^p_{\geq 0}\times \mathbb{R}^m$, define the index set of active constraints
    \begin{align}
        \mathcal{J}(\tilde{x}, \tilde{y}, \tilde{\mu}) \triangleq \{j\in \{1, \ldots, p\}: \tilde{y}_j = 0, (\nabla_{\tilde{y}}\mathcal{L}(\tilde{x}, \tilde{y}, \tilde{\mu}))^{(j)}<0\}.
    \end{align}
Then, the function $W: \mathbb{R}^n\times \mathbb{R}^p_{\geq 0}\times \mathbb{R}^m\rightarrow \mathbb{R}$, 
\begin{align}
    W(\tilde{x}, \tilde{y}, \tilde{\mu})&\triangleq \frac{1}{2}\big(\|\nabla_{\tilde{x}}\mathcal{L}(\tilde{x}, \tilde{y}, \tilde{\mu})\|^2 + \|\nabla_{\tilde{\mu}}\mathcal{L}(\tilde{x}, \tilde{y}, \tilde{\mu})\|^2\nonumber\\
    & + \sum_{j\in \{1, \ldots, p\}\setminus\mathcal{J}(\tilde{x}, \tilde{y}, \tilde{\mu})}((\nabla_{\tilde{y}}\mathcal{L}(\tilde{x}, \tilde{y}, \tilde{\mu}))^{(j)})^2)\big)\nonumber
\end{align}
is nonnegative everywhere in its domain and $ W(\tilde{x}, \tilde{y}, \tilde{\mu}) = 0$ if and only if $(\tilde{x}, \tilde{y}, \tilde{\mu})\in \text{Saddle}(\mathcal{L})$. For any trajectory $(\tilde{x}(t), \tilde{y}(t), \tilde{\mu}(t))$ of \eqref{eq:projected-saddle-dynamics}, the function $t\mapsto W(\tilde{x}(t), \tilde{y}(t), \tilde{\mu}(t))$
\begin{enumerate}
    \item is differentiable almost everywhere and if $(\tilde{x}, \tilde{y}, \tilde{\mu})\notin \text{Saddle}(\mathcal{L})$ for some $t\geq 0$, then $\frac{d}{dt}W(\tilde{x}(t), \tilde{y}(t), \tilde{\mu}(t))<0$ provided the derivative exists. Moreover, for any sequence of times $\{t_k\}_{k=1}^{\infty}$ such that $t_k \rightarrow t$ and $\frac{d}{dt}W(\tilde{x}(t_k), \tilde{y}(t_k), \tilde{\mu}(t_k))$ exists for every $t_k$, we have $\lim \sup_{k\rightarrow \infty}\frac{d}{dt}W(\tilde{x}(t_k), \tilde{y}(t_k), \tilde{\mu}(t_k))<0$;
    \item is right continuous and at any point of discontinuity $t^{\prime}\geq0$, we have $W(\tilde{x}(t^{\prime}), \tilde{y}(t^{\prime}), \tilde{\mu}(t^{\prime}))\leq \lim_{t\uparrow t^{\prime}}W(\tilde{x}(t), \tilde{y}(t), \tilde{\mu}(t))$.
\end{enumerate}
As a consequence, $\text{Saddle}(\mathcal{L})$ is globally asymptotically stable under the dynamical system \eqref{eq:nonlinear-opt} and the convergence of trajectories is to a point.
\end{theorem}

\section{Problem Formulation}
\label{section:problem-formulation}
In this work, we aim to design controllers for all agents to steer them from their initial positions to the target region, while ensuring inter-agent collision avoidance,  obstacle avoidance, and the connectivity of the communication graph. We define these properties formally as follows:
\begin{enumerate}         
    \item \textit{Inter-agent collision avoidance}: Let $d_0\in \mathbb{R}_{+}$ denote the minimum safe distance with $0<d_0<d_c$. The inter-agent collision avoidance is to ensure $\|p_i(t)-p_j(t)\|\geq d_0$ for all $t\geq t_0$, $\forall i\neq j$.
    \item \textit{Obstacle avoidance}: We assume there are $K\in \mathbb{Z}_{+}$ obstacles in the environment, whose positions are denoted as $s_k\in \mathbb{R}^l$ for all $k\in [K]$. We set $\|p_i(t)-s_k\|\geq r_k$ with $r_k\in \mathbb{R}_{+}$ for all $k\in [K]$ as a sufficient condition ensuring obstacle avoidance for agent $i$. 
    \item \textit{Connectivity of the communication graph}: The graph $\mathcal{G}(t)$ is connected at $t$ if there exists a path between every pair of vertices.
    \item \textit{Convergence to the target region}: The target region for each agent $i$ is given by $\Omega_i \triangleq \{x_i\in \mathbb{R}^n:(x_i-\overline{x})^T\Sigma(x_i-\overline{x})\leq 1 \}$ where $\Sigma\in \mathcal{S}_{++}^{n}$. Without loss of generality, we set $\overline{x} = \mathbf{0}^n$, as the ellipsoid $\Omega_i$ can always be recentered at the origin through a translation of coordinates, which does not affect the analysis.
\end{enumerate}
We define the safety constraint at time $t$ as  
\begin{align}
    \label{eq:safety-guarantee}
    \left\{
    \begin{array}{ll}
    \|p_i(t)-p_j(t)\|\geq d_0,     &\forall i\neq j  \\
    \|p_i(t)-s_k\|\geq r_k,    & \forall i\in \mathcal{V}, \forall k\in [K]
    \end{array}\right.
\end{align}

\begin{problem}
\label{problem:statement}
Given initial states  $\mathbf{x}(t_0)$ satisfying \eqref{eq:safety-guarantee} and a connected $\mathcal{G}(t_0)$, compute $\{\mathbf{u}(t): t\geq t_0\}$ such that there exists $T\geq t_0$ with $x_i(t)\in \Omega_i$
, $\forall i\in \mathcal{V}$, 
$\forall t\geq T$, while ensuring  \eqref{eq:safety-guarantee}   holds for all $t\geq t_0$. 
\end{problem}

Next, we construct sufficient conditions to ensure inter-agent collision avoidance, obstacle avoidance, connectivity of the communication graph, and the convergence of agents to the target region, respectively. We drop the argument $t$ when the context is clear. 

\textit{Inter-agent collision avoidance}: In order to ensure inter-agent collision avoidance between agents $i$ and $j$, we need $\|p_i(t)-p_j(t)\|\geq d_0$ for all $t\geq t_0$. Since agents separated by a sufficiently large distance cannot collide over a finite time horizon, we only consider collision avoidance between agent pairs whose relative distance is below the communication range $d_c$. 
We assume that there exists a CBF $h_{\text{inter}}^{i,j}:\mathcal{X}\times \mathcal{X}\rightarrow\mathbb{R}$, $(x_i,x_j) \mapsto h_{\text{inter}}^{i,j}(x_i,x_j)$, such that 
\begin{align*}
    \{(x_i, x_j): h_{\text{inter}}^{i,j}(x_i, x_j)\geq 0\}\subseteq \{(x_i, x_j): \|p_i-p_j\|\geq d_0\}. 
\end{align*}
We define 
\begin{align*}
    &g_{\text{inter}}^{i,j}(u_i, u_j; x_i, x_j) \triangleq L_{F_i}h_{\text{inter}}^{i,j}(x_i, x_j)+L_{G_i}h_{\text{inter}}^{i,j}(x_i, x_j)u_i\\
    &+ L_{F_j}h_{\text{inter}}^{i,j}(x_i, x_j)+L_{G_j}h_{\text{inter}}^{i,j}(x_i, x_j)u_j + \gamma_1(h_{\text{inter}}^{i,j}(x_i, x_j)),
\end{align*}
where $\gamma_1(\cdot)$ is an extended class $\mathcal{K}_{\infty}$ function.
The corresponding CBF condition can be represented as $-g_{\text{inter}}^{i,j}(u_i, u_j; x_i, x_j)\leq 0$.

\textit{Obstacle avoidance}: We assume there exists a CBF $h_{\text{obs}}^{i,k}: \mathcal{X}\rightarrow \mathbb{R}$, $x_i \mapsto h_{\text{obs}}^{i,k}(x_i)$,  such that 
\begin{align*}
    \{x_i: h_{\text{obs}}^{i,k}(x_i)\geq 0\}\subseteq \{x_i: \|p_i-s_k\|\geq r_k\}. 
\end{align*}
We define 
\begin{align*}
    g_{\text{obs}}^{i,k}(u_i; x_i) \triangleq L_{F_i} h_{\text{obs}}^{i,k}(x_i)+L_{G_i} h_{\text{obs}}^{i,k}(x_i)u_i + \gamma_2  (h_{\text{obs}}^{i,k}(x_i)).
\end{align*}
where $\gamma_2(\cdot)$ is an extended class $\mathcal{K}_{\infty}$ function.
The corresponding CBF condition is $-g_{\text{obs}}^{i,k}(u_i; x_i) \leq 0$.

For particular classes of system dynamics, existing works \cite{wang2016safety, wang2017safety,mestres2024distributed} have developed feasible inter-agent and obstacle collision CBFs that can be interpreted as special cases of the control-affine systems considered in this work.

\textit{Connectivity of the communication graph}: 
The following sufficient condition is used to ensure the connectivity of $\mathcal{G}(t)$ for all $t\geq t_0$. We introduce a communication margin $0<\epsilon_c<d_c$ and construct the Laplacian matrix using only those edges $(i,j)$ satisfying $\|p_i(t)-p_j(t)\|\leq d_c-\epsilon_c$. 
Let $\lambda_2(\mathbf{x}(t);d_c-\epsilon_c)$ denote the second-smallest eigenvalue of the Laplacian matrix at time $t$. If $\lambda_2(\mathbf{x}(t); d_c-\epsilon_c)>0$, then we have that the undirected graph $\mathcal{G}(t)$ is connected. Some works use CBF to ensure the connectivity of multi-agent systems. Following \cite{capelli2020connectivity, de2024distributed}, we assume there is a CBF $h_c(\mathbf{x}): \mathbb{R}^{nN}\rightarrow\mathbb{R}$,  $\mathbf{x} \mapsto h_c(\mathbf{x})$,  such that
\begin{align*}
    \{\mathbf{x}: h_c(\mathbf{x})\geq 0\}\subseteq \{\mathbf{x}: \lambda_2(\mathbf{x};d_c-\epsilon_c)>0\}.
\end{align*}
Assume that the initial condition satisfies $h_c(\mathbf{x}(t_0))\geq 0$. We define $$g^c(\mathbf{x}, \mathbf{u}) \triangleq \sum_{i=1}^{N}(L_{F_i}h_c(\mathbf{x})+L_{G_i}h_c(\mathbf{x})u_i)+\gamma_4\cdot h_c(\mathbf{x}),$$ where $\gamma_4>0$. The corresponding CBF condition is $-g^c(\mathbf{x}, \mathbf{u})\leq 0$.

\textit{Convergence to the target region}: We construct a sufficient condition to ensure each agent $i$ converges to $\Omega_i$ and stays within it. 
Let $0<\epsilon<1$, and define the corresponding subset of $\Omega_i$ as $\Omega_i^{\epsilon} \triangleq \{x_i\in \mathbb{R}^n: x_i^T\Sigma x_i-1+\epsilon\leq 0\}$. The task is to construct a controller such that 
there exists $T\geq t_0$, $x_i(t)\in \Omega_i^{\epsilon}\subset\Omega_i$, $\forall t\geq T$. To achieve this goal, we impose a control Lyapunov function (CLF)-like condition only on agents $i$ satisfying $x_i(t)\notin \Omega_i^{\epsilon}$, so as to steer them toward the $\Omega_i^{\epsilon}$. Assume there exists a positive semi-definite function $V_i: \mathcal{X}\rightarrow \mathbb{R}_{\geq0}$, $x_i \mapsto V_i(x_i)$, which is locally Lipchitz and twice continuously differentiable in $x_i$ such that 
\begin{align*}
    V_i(x_i)&= 0, \ \forall x_i\in \Omega_i^{\epsilon} \\
   \beta_1 \cdot d_{\Omega_i^{\epsilon}}(x_i)&\leq  V_i(x_i)\leq \beta_2 \cdot d_{\Omega_i^{\epsilon}}(x_i)
\end{align*}
where $\beta_2>\beta_1>0$.
A candidate CLF-like function is defined as $V(\mathbf{x}) \triangleq \sum_{i = 1}^{N}h_i(x_i)\cdot V_i(x_i)$, where 
\begin{align*}
    h_i(x_i) \triangleq \left\{
    \begin{array}{ll}
    e^{-\frac{1}{V_i(x_i)}}, & x_i\notin \Omega_i^{\epsilon}\\
    0, & x_i\in \Omega_i^{\epsilon}
    \end{array}\right. 
\end{align*}
The function $h_i(x_i)$ is smooth in $x_i$, which ensures that $V(\mathbf{x})$ is smooth. Moreover, $V(\mathbf{x})$ is enforced only for agents $i$ with $x_i\notin \Omega_i^{\epsilon}$.
The corresponding convergence constraint can be represented as
\begin{align}
\label{eq:clf-condition}
    \dot{V}(\mathbf{x}) + \gamma_3 V(\mathbf{x})\leq 0,
\end{align}
where $\gamma_3>0$.
Based on the fact that $h_i(x_i)$, $\frac{\partial h_i(x_i)}{\partial x_i}$, $V_i(x_i)$, and $\frac{\partial V_i(x_i)}{\partial x_i}$ are continuous in $x_i$, we have that $\dot{V}(\mathbf{x}) + \gamma_3 V(\mathbf{x})$ is continuous in $\mathbf{x}$.
Define
\begin{align}
   \label{eq:clf} 
   &g_{clf}(\mathbf{u};\mathbf{x}) \triangleq \sum_{i=1}^{N}\big[L_{F_i}(h_i(x_i)V_i(x_i))\nonumber\\
   &\quad\quad\quad+L_{G_i}(h_i(x_i)V_i(x_i))u_i + \gamma_3 h_i(x_i) V_i(x_i)\big]
\end{align}
and recast \eqref{eq:clf-condition} as $g_{clf}(\mathbf{u};\mathbf{x}) \leq 0$.

Finally, considering the restriction on the actuator, we set $\|u_i\|_{\infty}\leq c, \forall i\in \mathcal{V}$
with $c>0$. 
Computing $\mathbf{u}(t)$ at time $t$ to ensure inter-agent collision avoidance, obstacle avoidance, connectivity of the communication graph, and the convergence of agents to the target region can be formulated as the following optimization problem
\begin{subequations}\label{eq:problem-formulation}
\begin{align}    \min_{\mathbf{u}}&\quad \frac{1}{2}\|\mathbf{u}\|^2\\
        \text{s.t.} 
        & \quad -g_{\text{inter}}^{i,j}(u_i, u_j; x_i, x_j) \leq 0, \forall (i,j)\in \mathcal{E}(\mathbf{x})\label{eq:opt-inter}\\
        &\quad -g_{\text{obs}}^{i,k}(u_i; x_i)\leq 0, \forall i\in \mathcal{V}, k\in [K] \label{eq:16-obs}\\
        & \quad -g^c(\mathbf{u}; \mathbf{x} )\leq 0\label{eq:opt-connectivity} \\
        &\quad g_{clf}(\mathbf{u};\mathbf{x}) \leq 0\label{eq:opt-clf}\\
        &\quad \|u_i\|_{\infty}\leq c, \forall i\in \mathcal{V}
\end{align}
\end{subequations}
\begin{assumption}
\label{assumption:feasible}
    Problem \eqref{eq:problem-formulation} is feasible for all $\mathbf{x}\in \mathcal{X}^{N}$.
\end{assumption}

This assumption is standard in the distributed 
CBF-based control literature; see, e.g., 
\cite{mestres2024distributed}. 
Techniques for 
verification and synthesis of compatible CLF and CBFs can be found in \cite{dai2024verification}.
Define $A = \prod_{i=1}^{N}\Omega_i$ and $A^{\epsilon} = \prod_{i=1}^{N}\Omega_i^{\epsilon}$.  In what follows, we prove that the controller $\mathbf{u}(t)$ obtained by solving \eqref{eq:problem-formulation} is safe and stable. 
\begin{theorem}
    \label{theorem:clf-convergence}
    Under Assumption~\ref{assumption:feasible}, if $\mathbf{u}(t)$ is the solution to \eqref{eq:problem-formulation} for all $t\geq t_0$, then the controller is safe and stable while preserving connectivity:
    \begin{enumerate}
        \item $\|p_i(t)-p_j(t)\|\geq d_0$,  $\forall i\neq j$, and $\|p_i(t)-s_k\|\geq r_k$  $\forall i\in \mathcal{V}$, $k\in [K]$, for all $t\geq t_0$.
        \item The graph $\mathcal{G}(t)$ is connected for all $t\geq t_0$.
        \item There exists $T\geq t_0$ such that $\mathbf{x}(t)\in A$ for all $t\geq T$. Furthermore, set $A^{\epsilon}$ is UGAS for \eqref{eq:control-affine}.
    \end{enumerate}
\end{theorem}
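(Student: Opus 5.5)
The plan is to treat the three claims separately. Each claim follows from one constraint class in \eqref{eq:problem-formulation} together with an invariance or comparison argument. Throughout, assume the closed-loop system \eqref{eq:feedback-control} with $\mathbf{u}(t)$ the (unique, by strict convexity of the cost) minimizer is locally Lipschitz and forward complete, as stipulated when \eqref{eq:feedback-control} was introduced. Assumption~\ref{assumption:feasible} guarantees the minimizer exists at every state.

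\emph{Safety and connectivity (items 1 and 2).}

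For obstacles, constraint \eqref{eq:16-obs} says $u_i(\mathbf{x})\in K_{\text{cbf}}(x_i)$ for $h_{\text{obs}}^{i,k}$. Theorem~\ref{theorem:cbf} then gives forward invariance of $\{h_{\text{obs}}^{i,k}\ge 0\}$, and hence $\|p_i-s_k\|\ge r_k$.

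For connectivity, constraint \eqref{eq:opt-connectivity} is exactly the CBF condition for $h_c$ on the stacked system. Since $h_c(\mathbf{x}(t_0))\ge 0$, we get $h_c(\mathbf{x}(t))\ge 0$ for all $t$, so $\lambda_2(\mathbf{x}(t);d_c-\epsilon_c)>0$. The graph built from edges of length at most $d_c-\epsilon_c$ is a subgraph of $\mathcal{G}(t)$, so $\mathcal{G}(t)$ is connected.

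Inter-agent collisions need more care, because \eqref{eq:opt-inter} is imposed only on $(i,j)\in\mathcal{E}(\mathbf{x})$. The argument is by contradiction. Suppose some pair first reaches $h_{\text{inter}}^{i,j}<0$ at a time $t^\ast$. Since $\{h_{\text{inter}}^{i,j}\ge0\}$ only forces $\|p_i-p_j\|\ge d_0$, and $d_0<d_c$, continuity of $p_i-p_j$ lets us pick $t^\ast$ so that on a neighborhood of it $\|p_i-p_j\|<d_c$. On that interval the pair is in $\mathcal{E}$, so the CBF inequality holds there. The comparison lemma applied to $\dot h\ge -\gamma_1(h)$ with $h(t^\ast)\ge 0$ then contradicts $h$ becoming negative.

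The subtle case is a pair entering communication range with $h_{\text{inter}}^{i,j}$ already negative. I would rule this out by noting that $h_{\text{inter}}^{i,j}$ of the standard form (e.g.\ \cite{wang2017safety}) is nonnegative whenever $\|p_i-p_j\|\ge d_c$. If necessary I would add this as an implicit standing property of the assumed CBF. This step is the main obstacle for item 1.

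\emph{Convergence (item 3).}

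Along trajectories, \eqref{eq:opt-clf} gives $\dot V\le -\gamma_3 V$, so $V(\mathbf{x}(t))\le V(\mathbf{x}(t_0))e^{-\gamma_3(t-t_0)}$.

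Next, relate $V$ to the distance to $A^\epsilon$. Write $d_i=d_{\Omega_i^\epsilon}(x_i)$. The map $s\mapsto e^{-1/s}s$ is increasing on $s>0$, and $\beta_1 d_i\le V_i\le\beta_2 d_i$. Together these give
\[
\textstyle\sum_i \phi(\beta_1 d_i)\le V\le \sum_i\phi(\beta_2 d_i),\qquad \phi(s)=se^{-1/s},
\]
where $\phi$ is class $\mathcal{K}_\infty$-like. Since $d_{A^\epsilon}(\mathbf{x})^2=\sum_i d_i^2$, this yields class-$\mathcal{K}$ bounds $\underline\alpha(d_{A^\epsilon})\le V\le\overline\alpha(d_{A^\epsilon})$.

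With these bounds the three UGAS conditions of Definition~\ref{def:ugas} follow:
\begin{itemize}
\item $\psi(R)=\underline\alpha^{-1}(\overline\alpha(R))$;
\item $\mathcal{T}(R,\nu)$ is obtained by solving $\overline\alpha(R)e^{-\gamma_3 t}\le\underline\alpha(\nu)$ for $t$;
\item the limit condition holds because $V\to0$.
\end{itemize}

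The bounds are uniform since $\phi$ is monotone. The one catch is that $\phi$ is flat at $0$, so $\underline\alpha^{-1}$ is continuous but not Lipschitz. This is allowed, because $\psi$ need only be nondecreasing with $\psi(r)\to0$.

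Finally, $A^\epsilon$ lies in the interior of $A$: every point of $\partial\Omega_i$ satisfies $x^T\Sigma x=1$, so its distance to $\Omega_i^\epsilon$ is at least some $\delta(\epsilon)>0$. Choosing $\nu<\delta$ and $T=t_0+\mathcal{T}(R,\nu)$ then gives $\mathbf{x}(t)\in A$ for all $t\ge T$.
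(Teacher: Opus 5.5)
Your proof is correct, given the extra property you flag for item~1, but item~3 takes a different route from the paper.

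\textbf{Item 3.} The paper gets $d_{A^{\epsilon}}(\mathbf{x}(t))\to 0$ from LaSalle's invariance principle on the compact sublevel set $\{V\le V(\mathbf{x}(t_0))\}$. It builds $\psi$ from $\theta(s)=\beta_1 s\,e^{-1/(\beta_1 s)}$ together with the linear bound $V\le\beta_2\sqrt{N}\,d_{A^{\epsilon}}(\mathbf{x})$, which comes from $h_i\le 1$. This is essentially your $\underline\alpha^{-1}\circ\overline\alpha$ with a linear $\overline\alpha$. It then obtains $\mathcal{T}(R,\nu)$ from a uniform decrease rate $\max_{\Gamma}\dot V<0$ on the compact shell $\Gamma=\{c_1\le V\le c_2\}$.

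You instead integrate $\dot V\le-\gamma_3 V$ directly. This gives the explicit bound $\mathcal{T}(R,\nu)=\gamma_3^{-1}\ln\bigl(\overline\alpha(R)/\underline\alpha(\nu)\bigr)$. It also makes LaSalle and the compactness/continuity-of-$\dot V$ step unnecessary; on $\Gamma$ one has $\dot V\le-\gamma_3 c_1$ anyway.

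Your observation that $A^{\epsilon}$ lies at positive distance from $\partial A$ supplies a step the paper skips. The paper infers finite-time entry into $A$ from asymptotic convergence merely because $A^{\epsilon}\subset A$. Your subgraph remark likewise makes explicit why $\lambda_2(\mathbf{x};d_c-\epsilon_c)>0$ yields connectivity of $\mathcal{G}(t)$.

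\textbf{Item 1.} The paper applies Theorem~\ref{theorem:cbf} only to pairs in $\mathcal{E}(\mathbf{x})$ and never treats pairs out of range. So the property $\|p_i-p_j\|\ge d_c\Rightarrow h_{\text{inter}}^{i,j}\ge 0$ is tacitly needed by the paper as well. It holds for the simulation choice $\|x_i-x_j\|^2-d_0^2$ because $d_0<d_c$, and you are right to make it explicit.

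One sentence of yours should be repaired. Continuity alone does not let you choose $t^\ast$ with $\|p_i-p_j\|<d_c$ nearby, since $h_{\text{inter}}^{i,j}=0$ says nothing about the distance. Derive it from the property instead:
\begin{itemize}
\item Suppose $h_{\text{inter}}^{i,j}(t_1)<0$ and let $a=\sup\{t\le t_1: h_{\text{inter}}^{i,j}(t)\ge 0\}$.
\item Then $h_{\text{inter}}^{i,j}(a)=0$ and $h_{\text{inter}}^{i,j}<0$ on $(a,t_1]$.
\item By the property, the pair is in range on $(a,t_1]$, so $\dot h_{\text{inter}}^{i,j}\ge-\gamma_1(h_{\text{inter}}^{i,j})>0$ there.
\item Hence $h_{\text{inter}}^{i,j}(t_1)>0$, a contradiction.
\end{itemize}
This works for any extended class-$\mathcal{K}_\infty$ function $\gamma_1$ and needs no comparison lemma.

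Finally, both you and the paper need $h_{\text{inter}}^{i,j}$ and $h_{\text{obs}}^{i,k}$ to be nonnegative at $t_0$. That is stronger than the distance condition \eqref{eq:safety-guarantee} assumed in Problem~\ref{problem:statement}.
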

The proof of Theorem \ref{theorem:clf-convergence} is included in the Appendix~\ref{appendix:theorem3}.

This section presents a centralized approach to synthesize the controllers of agents via solving \eqref{eq:problem-formulation}. 
This framework relies on the availability of global information $(\mathbf{u}, \mathbf{x})$ and a central coordinator. In the following section, we present a distributed algorithm that achieves safety and stability by using only local measurements and neighbor-based communication.

\section{Distributed Algorithm}
\label{section:distributed-algo}
Some constraints in the optimization problem \eqref{eq:problem-formulation} involve the states and control inputs of multiple agents. 
To derive a distributed algorithm, we introduce constraint-mismatch variables following \cite{mestres2024distributed, tan2025continuous}. As a first step, we reformulate \eqref{eq:problem-formulation} as an equivalent centralized problem that incorporates these variables. This reformulation will then serve as the basis for distributed decomposition.
Based on this reformulation, we then develop a distributed algorithm to solve the resulting reformulated optimization problem. 

We begin by considering a fixed and complete communication graph with edge set $\mathcal{E}(\mathbf{x}(t)) = \{(i,j)\in \mathcal{V}\times \mathcal{V}:i\neq j\}$, $\forall t\geq t_0$. This corresponds to the scenario in which the communication range $d_c$ is sufficiently large so that all agents can exchange information with one another. We subsequently extend the analysis to the case where the communication range $d_c$ is limited, leading to a time-varying communication topology.

For each pair of agents $(i,j)$ with $i\neq j$, we introduce mismatch variables $z_i^{i,j}, z_j^{i,j}\in \mathbb{R}$ into \eqref{eq:opt-inter} and define
\begin{align*}
    &{g}_{i}^{i,j}(u_i, z_i^{i,j},z_j^{i,j}; x_i, x_j)\triangleq h_{i,j}(x_i, x_j)(- L_{F_i}h_{\text{inter}}^{i,j}(x_i, x_j) \\
    &- L_{G_i}h_{\text{inter}}^{i,j}(x_i, x_j)\cdot u_i-\frac{1}{2}\gamma_1(h_{\text{inter}}^{i,j}(x_i, x_j)) + (z_i^{i,j}-z_j^{i,j})),
\end{align*}
where  
\begin{align*}
    h_{i,j}(x_i, x_j) \triangleq \left\{
    \begin{array}{ll}
    e^{-\frac{1}{d_c - \|p_i-p_j\|}}, & d_c - \|p_i-p_j\|>0\\
    0, & d_c - \|p_i-p_j\|\leq 0
    \end{array}\right. 
\end{align*}
Recasting \eqref{eq:opt-inter} with the mismatch variables, we obtain
\begin{align}
    \label{eq:recast-inter-agent-cbf}
    \left\{
    \begin{array}{ll}
     {g}_{i}^{i,j}(u_i, z_i^{i,j},z_j^{i,j}; x_i, x_j)\leq 0    &  \\
      {g}_{j}^{i,j}(u_j, z_j^{i,j},z_i^{i,j}; x_j, x_i)\leq 0   & 
    \end{array}\right.
\end{align}
Similarly, we introduce mismatch variables $z_i^c\in \mathbb{R}$ into $g^c(\mathbf{x}, \mathbf{u})$. Let $\mathbf{z}^c \triangleq \text{col}(\{z_i^c\}_{i=1}^{N})$, and $\mathbf{z}_i^c \triangleq \text{col}(\{z_j^c:j\in \{i\}\cup \mathcal{N}_i(\mathbf{x})\})$ denote the vector composed of the variables $z_j^c$ accessible to agent $i$. Define $\mathbf{x}_{\mathcal{N}_i} \triangleq \text{col}(\{x_j:j\in \{i\}\cup \mathcal{N}_i(\mathbf{x}) \})$ as the vector of states accessible to agent $i$. 
Given a sufficiently large $d_c$, $\mathbf{x}$ and $\mathbf{z}^c$ are accessible to each agent. 
We define
\begin{align*}
    &{g}_i^{c}(u_i, \mathbf{z}^c; \mathbf{x})  \triangleq -L_{F_i}h_c(\mathbf{x}) - L_{G_i}h_c(\mathbf{x})\cdot u_i \\
    &-\gamma_4\cdot \frac{1}{N}h_c(\mathbf{x})+\sum_{j\neq i}h_{i,j}(x_i, x_j)(z_i^c - z_j^c).   
\end{align*}
The constraint is represented as ${g}_i^{c}(u_i, \mathbf{z}^c; \mathbf{x})\leq 0$ for all $i\in\mathcal{V}$.
\begin{assumption}
\label{assumption:connectivity}
At each time $t$, each agent $i\in \mathcal{V}$ has instantaneous 
    access to the connectivity function $h_c(\mathbf{x})$ and its partial derivative $\frac{\partial h_c(\mathbf{x})}{\partial x_i}$ based on its own state and the state information of its neighbors.
\end{assumption}

In \cite{capelli2020connectivity}, a connectivity 
function $h_c(\mathbf{x})$ is constructed for 
single-integrator multi-agent systems. Following 
\cite{yang2010decentralized, malli2021robust}, both 
$h_c(\mathbf{x})$ and 
$\frac{\partial h_c(\mathbf{x})}{\partial \mathbf{x}}$ 
are computed using distributed approximation algorithms. We therefore assume that the distributed 
estimates of $h_c(\mathbf{x})$ and 
$\frac{\partial h_c(\mathbf{x})}{\partial \mathbf{x}}$ 
are sufficiently accurate, so that the conditions in 
Assumption~\ref{assumption:connectivity} are satisfied.

Next, we introduce mismatch variable $z_i^{clf}\in \mathbb{R}$ into \eqref{eq:clf}. Let   $\mathbf{z}^{clf}\triangleq \text{col}(\{z_i^{clf}\}_{i=1}^{N})$. We define 
\begin{align*}
    &{g}_i^{clf}(u_i, \mathbf{z}^{clf}; \mathbf{x}) \triangleq L_{F_i}(h_i(x_i)V_i(x_i)) \\
    &+ L_{G_i}(h_i(x_i)V_i(x_i)) \cdot u_i+ \gamma_3(h_i(x_i) V_i(x_i)) \\
    &+\sum_{j\neq i}h_{i,j}(x_i, x_j)h_i(x_i)h_j(x_j)(z_i^{clf}-z_j^{clf}),
\end{align*}
    and decompose \eqref{eq:clf} as ${g}_i^{clf}(u_i, \mathbf{z}^{clf}; \mathbf{x})  \leq 0, \forall i\in \mathcal{V}$.
We define the stacked mismatch variable 
$\mathbf{z}\triangleq \text{col}(\{z_i^{i,j}, z_j^{i,j}: (i,j)\in \overline{\mathcal{E}}\}, \mathbf{z}^c, \mathbf{z}^{clf})$.
The recast problem \eqref{eq:problem-formulation} can be formulated as
\begin{subequations}
\label{eq:recast-problem-formulation}
\begin{align}    
    \min_{\mathbf{u}, \mathbf{z}}&\quad \frac{1}{2}(\|\mathbf{u}\|^2 +\xi\cdot \|\mathbf{z}\|^2)\\
        \text{s.t.} 
        & \quad {g}_{i}^{i,j}(u_i, z_i^{i,j},z_j^{i,j}; x_i, x_j)\leq 0, \forall i\neq j \\
        &\quad -g_{\text{obs}}^{i,k}(u_i; x_i)\leq 0, \forall i\in \mathcal{V}, k\in [K] \label{eq:18-obs}\\
        &\quad {g}_i^c(u_i, \mathbf{z}^c; \mathbf{x})\leq 0, \forall i\in \mathcal{V} \\
     &\quad {g}_i^{clf}(u_i, \mathbf{z}^{clf}; \mathbf{x})  \leq 0, \forall i\in \mathcal{V}\\
        &\quad \|u_i\|_{\infty}\leq c, \forall i\in \mathcal{V}
\end{align}
\end{subequations}
where $\xi>0$.
For a given $\mathbf{x}$, the cost and constraint functions in \eqref{eq:recast-problem-formulation} are convex in $(\mathbf{u}, \mathbf{z})$, with the cost function being strictly convex.
Based on \cite[Proposition~4.1]{mestres2023distributed}, we have the following proposition.
\begin{proposition}[Problem Equivalence \cite{mestres2023distributed}]
\label{propisition-equivalence}
Problem \eqref{eq:problem-formulation} is equivalent to 
Problem \eqref{eq:recast-problem-formulation}.
Let $(\mathbf{u}^{\ast}, \mathbf{z}^{\ast})$ be the optimal solution of \eqref{eq:recast-problem-formulation}. Then $\mathbf{u}^{\ast}$ is the optimizer of \eqref{eq:problem-formulation} .
\end{proposition}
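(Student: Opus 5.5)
The plan is to show two things: the projection $\mathbf{u}$ of any feasible point of \eqref{eq:recast-problem-formulation} is feasible for \eqref{eq:problem-formulation}, and any feasible $\mathbf{u}$ of \eqref{eq:problem-formulation} lifts to some feasible $(\mathbf{u},\mathbf{z})$. I will then compare optimal values, following the argument of \cite[Proposition~4.1]{mestres2023distributed}.

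\emph{Projection of feasible sets.} Let $(\mathbf{u},\mathbf{z})$ be feasible for \eqref{eq:recast-problem-formulation}. The mismatch terms cancel under summation:
\begin{itemize}
\item For an edge $(i,j)$ with $h_{i,j}>0$, add $g_i^{i,j}\le 0$ and $g_j^{i,j}\le 0$, divide by $h_{i,j}>0$, and use the symmetry $h_{i,j}=h_{j,i}$ and $h^{i,j}_{\text{inter}}=h^{j,i}_{\text{inter}}$. The terms $z_i^{i,j}-z_j^{i,j}$ and $z_j^{i,j}-z_i^{i,j}$ cancel, the two halves of $\gamma_1$ recombine, and we get $-g^{i,j}_{\text{inter}}\le 0$ for every $(i,j)\in\mathcal{E}(\mathbf{x})$.
\item Summing $g_i^c$ over $i$, the term $\sum_i\sum_{j\ne i}h_{i,j}(z_i^c-z_j^c)$ vanishes by the symmetry of $h_{i,j}$. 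This gives $-g^c\le 0$.
\item The same cancellation applies to $g_i^{clf}$, because the weight $h_{i,j}h_ih_j$ is symmetric in $(i,j)$. This gives $g_{clf}\le 0$.
\item The obstacle and input constraints are unchanged.
\end{itemize}

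\emph{Lifting.} Given $\mathbf{u}$ feasible for \eqref{eq:problem-formulation}, I construct $\mathbf{z}$.
\begin{itemize}
\item For each edge, the two local constraints are $a_i+(z_i-z_j)\le 0$ and $a_j-(z_i-z_j)\le 0$, with $a_i+a_j\le 0$. Choosing $z_i-z_j=-a_i$ satisfies both. Non-edges are trivial since $h_{i,j}=0$.
\item For the connectivity constraint, write $b_i=-L_{F_i}h_c-L_{G_i}h_cu_i-\gamma_4h_c/N$, so that $\sum_i b_i\le 0$. We need $\mathbf{z}^c$ with $(L_w\mathbf{z}^c)_i\le -b_i$, where $L_w$ is the weighted Laplacian with weights $h_{i,j}$. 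Since the range of $L_w$ is $\{\mathbf{1}\}^\perp$ when the weighted graph is connected, I take $L_w\mathbf{z}^c=-(\mathbf{b}-\bar b\mathbf{1})$, where $\bar b=\frac1N\sum_i b_i$. This gives $(L_w\mathbf{z}^c)_i=-b_i+\bar b\le -b_i$.
\item The CLF constraint is handled the same way, using weights $h_{i,j}h_ih_j$.
\end{itemize}

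\emph{Main obstacle.} The hard part is the connectivity requirement in the lifting step. For the connectivity constraint this holds because $h_c\ge0$ implies $\lambda_2>0$ on edges of length at most $d_c-\epsilon_c<d_c$, where $h_{i,j}>0$. The CLF weights, however, vanish for agents inside $\Omega_i^\epsilon$. I will handle this as follows. For such an agent $i$, the local term satisfies $L(h_iV_i)=0$, because $h_iV_i$ and its gradient vanish on $\Omega_i^\epsilon$. Those agents therefore impose no constraint. The agents outside $\Omega_i^\epsilon$ split into components of the weighted graph. On each component I would try the centering construction per component. This needs the component sums to be nonpositive, and the global inequality does not obviously give that. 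If that fails, I would restrict the claim to connected weighted graphs or follow the treatment in \cite{mestres2023distributed}.

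\emph{Optimizers.} Once the projection of the feasible set is shown to be exactly the feasible set of \eqref{eq:problem-formulation}, the optimizers are related as follows. Minimizing $\frac12\|\mathbf{u}\|^2+\frac{\xi}{2}\|\mathbf{z}\|^2$ does not in general give the $\mathbf{u}$-optimizer, because of the $\mathbf{z}$-cost. I would invoke \cite[Proposition~4.1]{mestres2023distributed} here, where equivalence is meant as equality of the feasible $\mathbf{u}$-projections with optimal $\mathbf{u}$ recovered, possibly for $\xi$ small. I would flag this as a point needing care.
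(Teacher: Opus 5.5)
Your projection step is correct. It is also the only direction the rest of the paper needs: Theorem~\ref{theorem:clf-convergence} and Lemma~\ref{lemma:h2} use only that the CBF/CLF constraints of \eqref{eq:problem-formulation} hold, not optimality. The paper's own proof is just the citation of \cite[Proposition~4.1]{mestres2023distributed}. The two steps you defer to that citation cannot be closed, because both claims are false in the generality stated.

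\emph{Optimizer step.} With $\xi>0$ fixed, the $\mathbf{u}$-part of the minimizer of \eqref{eq:recast-problem-formulation} is in general not the minimizer $\mathbf{u}^{\circ}$ of \eqref{eq:problem-formulation}. Take two scalar inputs with $L_{G_1}h_{\text{inter}}^{1,2}=L_{G_2}h_{\text{inter}}^{1,2}=1$, $L_{F_1}h_{\text{inter}}^{1,2}=-2$, $L_{F_2}h_{\text{inter}}^{1,2}=0$, $h_{\text{inter}}^{1,2}=0$, and all other constraints slack. The original constraint is $u_1+u_2\ge 2$, so $\mathbf{u}^{\circ}=(1,1)$. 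The local shares are $2-u_1+\delta\le 0$ and $-u_2-\delta\le 0$ with $\delta=z_1^{1,2}-z_2^{1,2}$. Minimizing $\frac12(u_1^2+u_2^2)+\frac{\xi}{2}\big((z_1^{1,2})^2+(z_2^{1,2})^2\big)$ gives $\delta=-\frac{4}{4+\xi}$, $u_1=\frac{4+2\xi}{4+\xi}$, $u_2=\frac{4}{4+\xi}$, which differs from $(1,1)$ for every $\xi>0$.

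More generally, assume each agent's active constraint gradients are linearly independent. Then the stationarity condition $\xi\mathbf{z}=-\Theta(\mathbf{x})^T\mathbf{y}$ shows that $\mathbf{u}^{\ast}=\mathbf{u}^{\circ}$ forces equal local multipliers, hence $\mathbf{z}^{\ast}=\mathbf{0}$. So the optimizers coincide only when $(\mathbf{u}^{\circ},\mathbf{0})$ is already feasible for \eqref{eq:recast-problem-formulation}.

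What you can prove is a quantitative statement. Suppose $(\mathbf{u}^{\circ},\hat{\mathbf{z}})$ is feasible for \eqref{eq:recast-problem-formulation}. Since $\mathbf{u}^{\ast}$ is feasible for \eqref{eq:problem-formulation} and $\mathbf{u}^{\circ}$ is the projection of the origin onto that convex set, $\frac12\|\mathbf{u}^{\ast}-\mathbf{u}^{\circ}\|^2\le\frac12\|\mathbf{u}^{\ast}\|^2-\frac12\|\mathbf{u}^{\circ}\|^2\le\frac{\xi}{2}\|\hat{\mathbf{z}}\|^2$. Exact recovery therefore needs $\xi\to 0$.

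\emph{Lifting step.} The obstruction you found is genuine, not technical. Let $C$ be a connected component of the graph with weights $h_{i,j}h_ih_j$, with indicator vector $\mathbf{1}_C$. Since $\mathbf{1}_C^T\tilde{L}_h(\mathbf{x})=0$, feasibility of the recast CLF constraints implies $\sum_{i\in C}\big(L_{F_i}(h_iV_i)+L_{G_i}(h_iV_i)u_i+\gamma_3h_iV_i\big)\le 0$ on every component separately. By contrast, \eqref{eq:opt-clf} constrains only the total.

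Components split in two situations, both compatible with a connected $\mathcal{G}(t)$:
\begin{itemize}
\item the only links between agents still outside their targets pass through an agent $k$ already in $\Omega_k^{\epsilon}$, so $h_k=0$;
\item two outside agents are farther than $d_c$ apart.
\end{itemize}
In either case the $\mathbf{u}$-projection of the feasible set of \eqref{eq:recast-problem-formulation} is strictly smaller than the feasible set of \eqref{eq:problem-formulation}, and the optimizers can differ even as $\xi\to 0$.

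The same happens for $\mathbf{z}^c$ at states with $h_c(\mathbf{x})<0$, where the graph with weights $h_{i,j}$ may be disconnected. Yet the proposition is asserted for every $\mathbf{x}$.

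Restricting to connected weight graphs, as you suggest, repairs the lifting but not the optimizer claim. A correct version must add both that restriction and $\xi\to 0$. Alternatively, it should be weakened to ``$\mathbf{u}^{\ast}$ is feasible for \eqref{eq:problem-formulation}'' together with the $\sqrt{\xi}$ bound above.
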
 

For a given $\mathbf{z}$, let $\mathbf{z}^{\{i\}} \triangleq \text{col}(\{z_i^{i,j}, z_j^{i,j}\}_{j\in \mathcal{V}\setminus\{i\}}, \mathbf{z}^{c}, \mathbf{z}^{clf})$, and define $\overline{u}_i(\mathbf{x}, \mathbf{z}^{\{i\}})$ as the solution to the following optimization problem.
\begin{subequations}
    \label{eq:local-problem-formulation}
\begin{align}
    \min_{u_i}&\quad \frac{1}{2}\|u_i\|^2 \\
        \text{s.t.} 
        & \quad {g}_{i}^{i,j}(u_i; z_i^{i,j},z_j^{i,j}, x_i, x_j)\leq 0, \forall j\neq i \label{eq:19-b}\\
        &\quad -g_{\text{obs}}^{i,k}(u_i; x_i)\leq 0, \forall  k\in [K] \label{eq:19-c}\\
        &\quad  {g}_i^c(u_i; \mathbf{z}^c, \mathbf{x})\leq 0,  \label{eq:19-d}\\
     &\quad {g}_i^{clf}(u_i; \mathbf{z}^{clf}, \mathbf{x})  \leq 0, \label{eq:19-e}\\
        &\quad\begin{bmatrix}
            I^m\\-I^m
        \end{bmatrix}u_i\leq \begin{bmatrix}
            c \mathbf{1}^m\\c \mathbf{1}^m\label{eq:19-f}
        \end{bmatrix}
\end{align}
\end{subequations}
Define $\overline{\mathbf{u}}(\mathbf{x}, \mathbf{z}) \triangleq \text{col}(\{\overline{u}_i(\mathbf{x}, \mathbf{z}^{\{i\}}): i\in \mathcal{V}\})$. If $\mathbf{z} = \mathbf{z}^{\ast}$, then we have $\overline{\mathbf{u}}(\mathbf{x}, \mathbf{z}^{\ast}) = \mathbf{u}^{\ast}$. 
\begin{assumption}
\label{assumption:compute_u_bar}
At each time $t$, each agent $i\in \mathcal{V}$ has instantaneous access to the value of $\overline{u}_i$.
\end{assumption}

This assumption is also adopted in \cite{mestres2024distributed}. 
It is practically motivated since, in many cases of interest, \eqref{eq:local-problem-formulation} reduces to a quadratic program, which can be solved efficiently using the method in  \cite{stellato2020osqp}.

Stacking all constraint functions of agent $i$ in \eqref{eq:local-problem-formulation}, we write them compactly in matrix form as $g_i(u_i; \mathbf{z}^{\{i\}},\mathbf{x})\triangleq \Psi_i(\mathbf{x})\cdot u_i+{\Theta}_i(\mathbf{x})\cdot \mathbf{z}^{\{i\}}+\phi_i(\mathbf{x})\leq 0$.
The definitions of $\Psi_i(\mathbf{x})$, ${\Theta}_i(\mathbf{x})$, and $\phi_i(\mathbf{x})$ are included in Appendix \ref{appendix:function-construction}. 
By construction, we have $g_i(u_i; \mathbf{z}^{(i)},\mathbf{x})\in \mathbb{R}^{M_i}$ with $M_i = N+K+1+2m$, 
and $\Psi_i(\mathbf{x})\in \mathbb{R}^{M_i\times m}$, ${\Theta}_i(\mathbf{x})\in \mathbb{R}^{M_i\times (4N-2)}$, and $\phi_i(\mathbf{x})\in \mathbb{R}^{M_i}$. Moreover, the constraints in \eqref{eq:recast-problem-formulation} can be represented as
$g_i(u_i,  \mathbf{z}^{\{i\}};\mathbf{x})\leq 0, ~ \forall i\in \mathcal{V}.$

Let $y_i^{i,j}\geq 0$, $y_{obs}^{i,k}\geq 0$, $y_i^c\geq 0$, $y_i^{clf}\geq 0$ and $y_i^{u}\in \mathbb{R}_{+}^{2m}$ denote the dual variables associated with the constraints \eqref{eq:19-b} to \eqref{eq:19-f}, respectively. 
Let $\tau>0$ denote the time-scale parameter. The proposed distributed controller is given by
\begin{subequations}
\label{eq:two-scale-dynamics}
\begin{align}
\dot{x}_i &= F_i(x_i)+G_i(x_i)\cdot \overline{u}_i(\mathbf{x}, \mathbf{z}^{\{i\}})  \label{eq:slow-1}\\
\tau \dot{w}_i &= -w_i-\Psi_i(\mathbf{x})^Ty_i \label{eq:fast-1}\\
\tau \dot{z}_i^{i,j} &=-\xi {z}_i^{i,j}-h_{i,j}(x_i, x_j)(y_i^{i,j}-y_j^{i,j}) \\
\tau \dot{z}_i^c &= -\xi z_i^c - \sum_{j\neq i}h_{i,j}(x_i, x_j)(y_i^{c}-y_j^{c})\\
\tau \dot{z}_i^{clf} &= -\xi z_i^{clf} - \sum_{j\neq i}h_{i,j}(x_i, x_j)h_i(x_i)h_j(x_j)(y_i^{clf}-y_j^{clf})\\
\tau \dot{y}_i & = \big[ g_i(w_i, \mathbf{z}^{\{i\}};\mathbf{x})\big]_{y_i}^{+} \label{eq:fast-last}
\end{align}
\end{subequations}
for all $i\in \mathcal{V}$, $j\in \mathcal{N}_i(\mathbf{x})$, where $w_i\in \mathbb{R}^m$ and $y_i = \text{col}(\{y_i^{i,j}\}_{j\neq i}, \{y_{obs}^{i,k}\}_{k\in [K]}, y_i^c, y_i^{clf}, y_i^{u})$.

The entire closed-loop control policy can be described as follows. We define $\mathbf{w} \triangleq \text{col}(\{w_i\}_{i\in \mathcal{V}})$ and $\mathbf{y} \triangleq \text{col}(\{y_i\}_{i\in \mathcal{V}})$.
At each time instant, each agent $i$  evaluates the value of $(\mathbf{w}, \mathbf{z}, \mathbf{y})$ in the fast system from  \eqref{eq:fast-1} to \eqref{eq:fast-last} and computes the associated $\overline{u}_i$ by solving \eqref{eq:local-problem-formulation}. 
Then, agent $i$ applies the control input $\overline{u}_i$ to the system \eqref{eq:slow-1}. 
\begin{assumption}
    \label{assumption:feasible-mismatch}
    The optimization problem \eqref{eq:local-problem-formulation} is feasible for all $\mathbf{x}\in \mathcal{X}$ and $\mathbf{z}^{\{i\}}\in \mathbb{R}^{4N-2}$.
\end{assumption}

For a fixed $\mathbf{x}$, the existence and uniqueness of the equilibrium point for \eqref{eq:fast-1} to \eqref{eq:fast-last} follows from \cite[Theorem~5.1]{cherukuri2017role}. We denote this equilibrium point by $\text{col}(\mathbf{w}^{\ast}(\mathbf{x}), \mathbf{z}^{\ast}(\mathbf{x}), \mathbf{y}^{\ast}(\mathbf{x}))$.
\begin{assumption}
\label{assumption:locally-lipschitz}
    The function $\overline{u}_i(\mathbf{x}, \mathbf{z}^{\{i\}})$  is locally Lipschitz in $(\mathbf{x}, \mathbf{z}^{\{i\}})$ for all $i\in \mathcal{V}$. Furthermore, for given $\mathbf{x}$, $\text{col}(\mathbf{w}^{\ast}(\mathbf{x}), \mathbf{z}^{\ast}(\mathbf{x}), \mathbf{y}^{\ast}(\mathbf{x}))$ is locally Lipschitz in $\mathbf{x}$.
\end{assumption}

Conditions that guarantee the local Lipschitz continuity of the solution maps of parametric optimization problems, including those of the form \eqref{eq:local-problem-formulation} and \eqref{eq:recast-problem-formulation}, are established in \cite{mestres2025regularity}.

The two-timescale dynamical system \eqref{eq:two-scale-dynamics} is a special case of the differential inclusion \eqref{eq:diff-inclusion}. 
Let the fast-variable vector in \eqref{eq:two-scale-dynamics} be defined as
$\boldsymbol{\zeta} \triangleq \text{col}(\mathbf{w},\mathbf{z},\mathbf{y})$, 
and denote the equilibrium point of the fast variables by
$\boldsymbol{\zeta}^{\ast}(\mathbf{x}) 
\triangleq 
\text{col}\left(
\mathbf{w}^{\ast}(\mathbf{x}),
\mathbf{z}^{\ast}(\mathbf{x}),
\mathbf{y}^{\ast}(\mathbf{x})
\right).$
Under this notation, the set-valued map appearing in \eqref{eq:diff-inclusion} reduces to a singleton set for the system \eqref{eq:two-scale-dynamics}. 
Specifically, for each $(\mathbf{x},\boldsymbol{\zeta})$, the right-hand side of \eqref{eq:diff-inclusion} satisfies
\[
J(\mathbf{x},\boldsymbol{\zeta})
=
\big\{
\mathcal{F}(\mathbf{x},\boldsymbol{\zeta})
\big\},
\]
where $\mathcal{F}(\mathbf{x},\boldsymbol{\zeta})$ denotes the vector field defining the dynamics in \eqref{eq:two-scale-dynamics}. 
It follows that \eqref{eq:two-scale-dynamics} is an ordinary differential equation that corresponds to the unique realization of the differential inclusion \eqref{eq:diff-inclusion}.
Moreover, the set-valued map $\mathcal{H}(\cdot)$ in \eqref{eq:slow-system} is given by the singleton
$\mathcal{H}(\mathbf{x})
\triangleq
\left\{
\text{col}\left(
\mathbf{w}^{\ast}(\mathbf{x}),
\mathbf{z}^{\ast}(\mathbf{x}),
\mathbf{y}^{\ast}(\mathbf{x})
\right)
\right\}$.
The slow vector field $D(\cdot)$ appearing in \eqref{eq:slow-system} is defined as
$D(\mathbf{x})
\triangleq
\text{col}\left(
D_1(\mathbf{x}),\ldots,D_N(\mathbf{x})
\right)$,
where, for each agent $i$,
$D_i(\mathbf{x})
\triangleq
F_i(x_i)
+
G_i(x_i)\,
\overline{u}_i\!\left(\mathbf{x},(\mathbf{z}^{\{i\}})^{\ast}\right)$.

With these definitions, the reduced slow dynamics obtained from \eqref{eq:two-scale-dynamics} coincide with the system \eqref{eq:slow-system}, and the conditions required for applying Theorem~\ref{:theorem-converge} developed for \eqref{eq:diff-inclusion} are satisfied. For the system \eqref{eq:two-scale-dynamics}, it suffices to verify Assumptions HF, H1, H2, and H3 required by Theorem~\ref{:theorem-converge}. 
Assumptions HF, H2, and H3 are established in Appendix~\ref{appendix:lemmas} through Lemma \ref{lemma:hf} to Lemma \ref{lemma:h3}. Assumption H1 is verified in the following theorem.

We now state the main result of this paper. 
\begin{theorem}
\label{theorem:safe-stable}
    If Assumptions  \ref{assumption:feasible} to  \ref{assumption:locally-lipschitz} hold, then the controller $\overline{u}_i(\mathbf{x}, \mathbf{z}^{\{i\}})$ is safe and stable for all $i\in \mathcal{V}$, i.e., 
    \begin{enumerate}
        \item $\|p_i(t)-p_j(t)\|\geq d_0,     \forall i\neq j$ and $\|p_i(t)-s_k\|\geq r_k,  \forall i\in \mathcal{V}, k\in [K]$ for all $t\geq t_0$.
        \item The graph $\mathcal{G}(t)$ is connected for all $t\geq t_0$.
        \item For any $\mathbf{x}(t_0)=\mathbf{x}_0\in \mathcal{X}$,
there exists $r>0$ with the following property.
Suppose
$$\text{col}(\mathbf{w}_0,\mathbf{z}_0, \mathbf{y}_0)
\in \{\text{col}(\mathbf{w}^{\ast}(\mathbf{x}_0),
\mathbf{z}^{\ast}(\mathbf{x}_0),
\mathbf{y}^{\ast}(\mathbf{x}_0))\} + B_r.$$
Then, for every $\eta>0$, there exist $T_{\eta}>t_0$
and, for each $T\geq T_{\eta}$, a constant $\tau_0>0$ can be found
such that
$$\sup_{t\geq T} d_A(\mathbf{x}(t))\leq \eta,
\quad \forall \tau \in (0,\tau_0).$$
    \end{enumerate}
\end{theorem}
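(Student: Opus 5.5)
The plan is to reduce all three claims to properties of the reduced slow system $\dot{\mathbf{x}}\in D(\mathbf{x})$ and then transfer them to the full two-time-scale system through Theorem~\ref{:theorem-converge}. The first step is to identify the fast equilibrium. For fixed $\mathbf{x}$, the fast subsystem \eqref{eq:fast-1}--\eqref{eq:fast-last} is the projected saddle-point dynamics \eqref{eq:projected-saddle-dynamics} for the Lagrangian of \eqref{eq:recast-problem-formulation}, with $\mathbf{w}$ as a copy of $\mathbf{u}$. The cost $\frac12(\|\mathbf{w}\|^2+\xi\|\mathbf{z}\|^2)$ is strictly convex and the constraints are affine in $(\mathbf{w},\mathbf{z})$. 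Theorem~\ref{theorem:5-1} then gives a unique, globally asymptotically stable saddle point $\boldsymbol{\zeta}^\ast(\mathbf{x})$. Its primal part is the optimizer $(\mathbf{u}^\ast,\mathbf{z}^\ast)$ of \eqref{eq:recast-problem-formulation}, so $\mathcal{H}(\mathbf{x})$ is a singleton and therefore compact and convex. Assumption~\ref{assumption:locally-lipschitz} makes $\mathcal{H}$ Lipschitz on bounded sets, which verifies H1; a global constant $L_K$ is obtained by restricting to the compact region where trajectories remain. By Proposition~\ref{propisition-equivalence} and the identity $\overline{\mathbf{u}}(\mathbf{x},\mathbf{z}^\ast)=\mathbf{u}^\ast$, the reduced slow field $D(\mathbf{x})$ is exactly the closed loop driven by the centralized controller solving \eqref{eq:problem-formulation}.

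Next, Theorem~\ref{theorem:clf-convergence} applies to $\dot{\mathbf{x}}\in D(\mathbf{x})$. It yields that $A^{\epsilon}$ is UGAS, which together with Lemmas~\ref{lemma:hf}--\ref{lemma:h3} gives HF, H2 and H3. Theorem~\ref{:theorem-converge} then produces, for $\boldsymbol{\zeta}_0\in\mathcal{H}(\mathbf{x}_0)+B_r$, the estimate \eqref{eq:converge-result-3} with $A^\epsilon$ in place of $A$. Since $A^\epsilon\subset A$, we have $d_A\le d_{A^\epsilon}$, and this gives item~3.

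Items~1 and 2 hold for the actual trajectory at every $\tau$, not only as $\tau\to0$, so they should be argued directly. The point is that $\overline{u}_i(\mathbf{x},\mathbf{z}^{\{i\}})$ satisfies \eqref{eq:19-b}--\eqref{eq:19-e} for \emph{any} $\mathbf{z}$, and it is feasible by Assumption~\ref{assumption:feasible-mismatch}. The argument for the pairwise constraints runs as follows. Fix an edge with $\|p_i-p_j\|<d_c$, so that $h_{i,j}>0$. Dividing by $h_{i,j}$ and adding the constraints of $i$ and $j$ cancels $z_i^{i,j}-z_j^{i,j}$ and recovers $g^{i,j}_{\text{inter}}(\overline u_i,\overline u_j)\ge0$. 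For pairs with $\|p_i-p_j\|\ge d_c>d_0$ there is nothing to enforce locally. One must still exclude the case where such a pair approaches $d_0$ without first entering range; this follows by continuity, because the pair must cross $d_c$ before reaching $d_0$. The same cancellation handles connectivity: summing $g_i^c\le0$ over $i$ kills $\sum_i\sum_{j\ne i}h_{i,j}(z_i^c-z_j^c)$ by the antisymmetry $h_{i,j}=h_{j,i}$, and gives $g^c\ge0$. With $\overline{\mathbf{u}}$ locally Lipschitz (Assumption~\ref{assumption:locally-lipschitz}), Theorem~\ref{theorem:cbf} then gives forward invariance of $h^{i,j}_{\text{inter}}\ge0$, $h^{i,k}_{\text{obs}}\ge0$ and $h_c\ge0$. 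The set inclusions defining these CBFs yield item~1 and $\lambda_2>0$, which is item~2.

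I expect the main obstacle to be the time-varying topology in H1/HF and in the cancellation argument. Links switching on or off must not make the vector field or $\boldsymbol{\zeta}^\ast$ discontinuous. I would handle this by noting that every coupling term carries the factor $h_{i,j}$, which is smooth and vanishes to all orders at $\|p_i-p_j\|=d_c$. Consequently, the dynamics written with all pairs $i\neq j$ coincide with the neighbor-only implementation. This also shows that they are globally defined and Lipschitz, which is where the truncation is essential.
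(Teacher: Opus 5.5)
Your proposal is correct and takes essentially the same route as the paper. Safety and connectivity follow from summing the agents' local constraints so the mismatch terms cancel and then invoking Theorem~\ref{theorem:cbf}. Item~3 follows from verifying H1 via the singleton fast equilibrium and Assumption~\ref{assumption:locally-lipschitz}, taking HF, H2 and H3 from Lemmas~\ref{lemma:hf}--\ref{lemma:h3}, applying Theorem~\ref{:theorem-converge} with $A^{\epsilon}$, and using $d_A\le d_{A^{\epsilon}}$. In places you are more careful than the paper: you divide by $h_{i,j}>0$ for in-range pairs, which the paper's identity $-g_{\text{inter}}^{i,j}=g_i^{i,j}+g_j^{j,i}$ silently drops, and you note that H1 needs a global rather than merely local Lipschitz constant.
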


\begin{proof}
    We first prove the safety of the controller. At each time instant, for a fixed $\mathbf{z}$, the solution to \eqref{eq:local-problem-formulation} is denoted as $\overline{\mathbf{u}}(\mathbf{x}, \mathbf{z})$. Then, we have $\text{col}(\overline{\mathbf{u}}( \mathbf{x},\mathbf{z}), \mathbf{z})$ is in the feasible set of \eqref{eq:recast-problem-formulation}, since the constraints in \eqref{eq:recast-problem-formulation} are satisfied by construction. Furthermore, we have that 
    \begin{align*}
        -{g}_{\text{inter}}^{i,j}(u_i, u_j; x_i, x_j)& = {g}_{i}^{i,j}(u_i, z_i^{i,j},z_j^{i,j}; x_i, x_j)\\
        &+{g}_{j}^{j,i}(u_j, z_j^{i,j},z_i^{i,j}; x_j, x_i) \leq 0, \forall i \neq j,
    \end{align*}
    $-g_{\text{obs}}^{i,k}(u_i;x_i)\leq 0$, $\forall k\in [K]$, for all $i\in \mathcal{V}$, and $-g^c(\mathbf{x}, \mathbf{u}) =\sum_{i=1}^{N}{g}_i^{c}(u_i, \mathbf{z}^c; \mathbf{x})\leq 0$, implying \eqref{eq:opt-inter}, \eqref{eq:16-obs}, and \eqref{eq:opt-connectivity} are satisfied for all $t\geq t_0$. 
    By invoking Theorem~\ref{theorem:cbf}, the super-level sets
sets $\{(x_i, x_j): h_{\text{inter}}^{i,j}(x_i, x_j) \geq 0\}$ for all $i\neq j$, $\{x_i: h_{\text{obs}}^{i,k}(x_i) \geq 0\}$ for all $i \in \mathcal{V}$, $k \in [K]$, and $\{\mathbf{x}: h_c(\mathbf{x}) \geq 0\}$ are all forward invariant. By construction of $h_{\text{inter}}^{i,j}(x_i, x_j)$ and
$h_{\text{obs}}^{i,k}(x_i)$, this implies
$\|p_i-p_j\|\geq d_0$ for all $i\neq j$ and
$\|p_i-s_k\|\geq r_k$ for all $i\in \mathcal{V}$,
$k\in [K]$, for all $t\geq t_0$. Since $\{\mathbf{x}: h_c(\mathbf{x}) \geq 0\} \subseteq \{\mathbf{x}: \lambda_2(\mathbf{x}) > 0\}$, it follows that $\lambda_2(\mathbf{x}(t)) > 0$ for all $t \geq t_0$, preserving connectivity of the communication graph.

    Next, we prove the stability of the controller. 
    Since the closed-loop dynamics
\eqref{eq:two-scale-dynamics} are non-smooth, we appeal to
Theorem~\ref{:theorem-converge}.
For any fixed $\mathbf{x}$, system \eqref{eq:fast-1} to \eqref{eq:fast-last} has a unique equilibrium point $\text{col}(\mathbf{w}^{\ast}(\mathbf{x}), \mathbf{z}^{\ast}(\mathbf{x}), \mathbf{y}^{\ast}(\mathbf{x}))$ \cite[Theorem~5.1]{cherukuri2017role}. It follows that the equilibrium set is a singleton, hence trivially compact and convex. Moreover, by Assumption \ref{assumption:locally-lipschitz}, $\text{col}(\mathbf{w}^{\ast}(\mathbf{x}), \mathbf{z}^{\ast}(\mathbf{x}), \mathbf{y}^{\ast}(\mathbf{x}))$ is locally Lipschitz in $\mathbf{x}$. Thus, Assumption H1 is satisfied.
By Lemmas~\ref{lemma:hf} through~\ref{lemma:h3}, all remaining assumptions
of Theorem~\ref{:theorem-converge} are satisfied for the
closed-loop system. Applying Theorem~\ref{:theorem-converge} then
yields that, for sufficiently small $\tau$, the state
$\mathbf{x}(t)$ converges to a neighborhood of $A^{\epsilon}$. Since $A\subseteq A^{\epsilon}$, it follows that
$\sup_{t\geq T} d_A(\mathbf{x}(t))\leq
\sup_{t\geq T} d_{A^{\epsilon}}(\mathbf{x}(t))\leq\eta$,
completing the proof.
\end{proof}

Theorem~\ref{theorem:safe-stable} demonstrates that the controller $\overline{u}_i(\mathbf{z}, \mathbf{x})$ guarantees both safety and convergence of the agents to the target region while preserving connectivity. However, this controller requires access to global information $(\mathbf{x}, \mathbf{z}^c, \mathbf{z}^{clf})$ for its computation. When the communication range $d_c$ is limited, or when hardware constraints prevent each agent from storing or processing global information in large-scale systems, a scalable implementation of the controller becomes necessary. 

For each agent $i$, computing $\overline{u}_i$ by solving  \eqref{eq:local-problem-formulation} requires both local computation and information exchanged with neighboring agents. 
At the initialization stage, agent $i$ initializes and stores the following fast variables $\{z_i^{i,j}, z_j^{i,j}: j\neq i\}$, $\{y_i^{i,j}, y_j^{i,j}: j\neq i\}$, $z_i^c$, $y_i^c$, $z_i^{clf}$, and $y_i^{clf}$. In particular, all variables $z_i^{i,j}$ and $y_i^{i,j}$ are initialized identically across agents.

For the communication constraint \eqref{eq:19-d}, define $\mathbf{z}_{\mathcal{N}_i}^{c} \triangleq \text{col}(\{z_i^c: \{i\}\cup \mathcal{N}_i(\mathbf{x})\})$, we have
\begin{align*}
    &{g}_i^c(u_i; \mathbf{z}^c, \mathbf{x}) = -L_{F_i}h_c(\mathbf{x}) - L_{G_i}h_c(\mathbf{x})\cdot u_i \\
    &-\gamma_4\cdot \frac{1}{N}h_c(\mathbf{x})+\sum_{j\in \mathcal{N}_i(\mathbf{x})}h_{i,j}(x_i, x_j)(z_i^c - z_j^c),   
\end{align*}
This expression depends only on $(\mathbf{x}_{\mathcal{N}_i}, \mathbf{z}_{\mathcal{N}_i}^{c})$.
The update dynamics of $z_i^c$ and $y_i^c$ are given by
\begin{align*}
    \tau \dot{z}_i^c 
    & = -\xi z_i^c - \sum_{j\in \mathcal{N}_i(\mathbf{x})}h_{i,j}(x_i, x_j)(y_i^{c}-y_j^{c})\\
    \tau \dot{y}_i^c& = [{g}_i^c(u_i; \mathbf{z}^c, \mathbf{x})]_{y_i^c}^{+},
\end{align*}
Hence, given
$\mathbf{x}_{\mathcal{N}_i}$, $(z_i^c,y_i^c)$ can be updated locally using information received from neighboring agents, namely $(z_j^c, y_j^c)$ for all $j\in \mathcal{N}_i(\mathbf{x})$.
Similarly, define $\mathbf{z}_{\mathcal{N}_i}^{clf} \triangleq \text{col}(\{z_i^{clf}: \{i\}\cup \mathcal{N}_i(\mathbf{x})\})$. The constraint \eqref{eq:19-e} can be represented as
\begin{align*}
    &{g}_i^{clf}(u_i, \mathbf{z}^{clf}; \mathbf{x}) = -L_{F_i}(h_i(x_i)V_i(x_i)) \\
    &- L_{G_i}(h_i(x_i)V_i(x_i)) \cdot u_i- \gamma_3(h(x_i) V_i(x_i)) \\
    &+\sum_{j\in \mathcal{N}_i(\mathbf{x})}h_{i,j}(x_i, x_j)(z_i^{clf}-z_j^{clf}).
\end{align*}
Hence, given  $\mathbf{x}_{\mathcal{N}_i}$, $(z_i^{clf},y_i^{clf})$ can be updated locally with information exchanged from neighbors $(z_j^{clf}, y_j^{clf})$ for all $j\in \mathcal{N}_i(\mathbf{x})$.
Finally, the constraint \eqref{eq:19-b} satisfies $g_i^{i,j}(u_i; z_i^{i,j}, z_j^{i,j}, x_i, x_j) = 0$, for all $j\notin \mathcal{N}_i(\mathbf{x})$, implying \eqref{eq:19-b} is equivalent to 
\begin{align*}
    g_i^{i,j}(u_i; z_i^{i,j}, z_j^{i,j}, x_i, x_j) \leq  0, \forall j\in \mathcal{N}_i(\mathbf{x}).
\end{align*}
When $j\in \mathcal{N}_i(\mathbf{x})$, agent $i$ receive $(z_j^{i,j}, y_j^{i,j})$ from agent $j$ and update $(z_i^{i,j}, y_i^{i,j})$ locally. When $j\notin \mathcal{N}_i(\mathbf{x})$, the dynamics reduce to $\tau \dot{z}_i^{i,j} = -\xi z_i^{i,j},\;
\tau \dot{z}_j^{i,j} = -\xi z_j^{i,j},\;
\tau \dot{y}_i^{i,j} = \tau \dot{y}_j^{i,j} = 0.$
The value of $(z_j^{i,j}, y_j^{i,j})$ can be evaluated by agent $i$, yielding the same values as those computed by agent $j$.
Therefore, all quantities required to compute $\overline{u}_i$ can be obtained using only local information and communication with neighboring agents. Consequently, the proposed control law can be implemented in a fully distributed manner.

\section{Simulations}
\label{section:simulation}
In this section, we validate the effectiveness of the proposed approach through a case study. All simulations are conducted on a laptop with an Apple M3 processor and 36 GB of RAM.
\subsection{Setup}
We consider a multi-agent system consisting of 5 agents. Each agent follows single integrator dynamics $\dot{x}_i(t) = u_i(t)$, where $x_i = [x_{i,1}, x_{i,2}]^T\in \mathcal{X}\subseteq\mathbb{R}^2$ and $u_i\in \mathcal{U}\subseteq\mathbb{R}^2$ denote the position and control input of agent $i$, respectively. The target region for each agent is defined as $\Omega_i = \{x_i\in \mathbb{R}^2: x_i^T\Sigma_ix_i\leq 1\}$, where $\Sigma_i = \begin{bmatrix}
    2 &0\\
    0 & 0.25
\end{bmatrix}$ for all $i\in \mathcal{V}$. The agents are initially placed at $[-2.1, -0.75]^T$, $ [-2.1, 0.75]^T$, $[-2.0, 0.0]^T$, $[-1.9, -0.5]^T$, and $[-1.9, 0.5]^T$. This scenario includes seven obstacles, with centers at $[-1.5, 0.6]^T$, $[-1.1, 0.72]^T$, $[-0.65, 1.0]^T$, $[-1.25, 0.05]^T$, $[-1.5, -0.6]^T$, $[-1.1, -0.72]^T$, and $[-0.65, -1.0]^T$. The corresponding radii for each obstacles are $0.16$, $0.22$, $0.2$, $0.1$, $0.16$, $0.22$, and $0.2$, respectively.

We construct CBFs to enforce safety and connectivity constraints as follows. For inter-agent collision avoidance, we define $g_{\text{inter}}^{i,j}(u_i, u_j; x_i, x_j)=2(x_i-x_j)^T(u_i-u_j)+\gamma_1(\|x_i-x_j\|^2-d_0^2)$, where $\gamma_1 = 2$. For obstacle collision avoidance, we define $g_{\text{obs}}^{i,k}(u_i; x_i)= 2(x_i-s_k)^Tu_i+\gamma_2(\|x_i-s_k\|^2-r_k^2)$, where $\gamma_2 = 2$. To enforce connectivity, we define a weighted communication graph with adjacency weights
\begin{align*}
    &a_{i,j} \\
    &\triangleq \left\{
    \begin{array}{ll}
     e^{((d_c-\epsilon_c)^2 - \|x_i-x_j\|^2)^2/\sigma} -1,   &  \|x_i-x_j\|\leq d_c-\epsilon_c\\
     0,    & \text{otherwise}
    \end{array}\right.
\end{align*}
where $\sigma>0$ is chosen to ensure $a_{i,j}\leq 1$. In this simulation, we set $d_c = 0.9$, $\epsilon_c=0.1$, and $\sigma = \frac{(d_c-\epsilon_c)^4}{\ln 2}$.
The corresponding Laplacian matrix $L_{comm}$ is defined as 
\begin{align*}
    [L_{comm}]_{i,j}\triangleq 
    \left\{
    \begin{array}{ll}
     \sum_{j=1}^N a_{i,j},    & i=j \\
     -a_{i,j},    & i\neq j
    \end{array}\right.
\end{align*}
Let $\lambda_2(\mathbf{x})$ denote the second smallest eigenvalue of $L_{comm}$. The connectivity CBF is defined as $h_c(\mathbf{x}) = \lambda_2(\mathbf{x})-\chi$, where $\chi = 0.1$. The corresponding CBF constraint is $g^c(\mathbf{x}, \mathbf{u}) = \frac{\partial \lambda_2(\mathbf{x})}{\partial \mathbf{x}}\mathbf{u}+\lambda_2(\mathbf{x})-\chi$, where $\frac{\partial \lambda_2(\mathbf{x})}{\partial x_i} = \sum_{j\neq i}\frac{\partial a_{i,j}}{\partial x_i}(v_2^i-v_j^i)^2$. We use $v_2^i$ and $v_2^j$ to denote the $i$-th and $j$-th component of eigenvector associated to $\lambda_2(\mathbf{x})$, respectively.
In order to synthesize $g_{clf}(\mathbf{u}; \mathbf{x})$, we select $V_i(x_i) = x_i^T\Sigma x_i -1 +\epsilon$, where $\epsilon = 0.5$. The corresponding CLF-like function is defined as $g_{clf}(\mathbf{u}; \mathbf{x}) = \sum_{i=1}^{N}\left(\frac{\partial (h_i(x_i)V_i(x_i))}{\partial x_i}u_i+\gamma_3h_i(x_i)V_i(x_i)\right)$, where $\gamma_3 = 0.01$.
In the distributed algorithm, we set $\tau = 0.002$ and $\xi = 0.5$ for the fast dynamical system.
\subsection{Results} 
As shown in Fig.~\ref{fig:trajectories}, all agents successfully converge to the target region while satisfying safety and communication connectivity constraints. 
To illustrate the evolution of the communication topology, we select four representative time instants at $t = 0, 50, 75, 250$, and visualize the corresponding communication edges. The communication graph is state-dependent and time-varying. From $t=0$ to $t=50$, edge $(3,4)$ is established as Agents $3$ and $4$ move within communication range. From $t=50$ to $t=75$, edge $(1,2)$ is temporarily lost as the inter-agent distance exceeds the communication threshold $d_c$. From $t=75$ to $t=250$, edge $(1,2)$ is re-established. These results demonstrate that the proposed method can accommodate time-varying communication graphs, where edges may be created or removed dynamically, while still ensuring convergence to the target region. To further analyze the connectivity of the communication graph during the entire process, we construct a binary communication graph based on the inter-agent distances. Specifically, for each pair of agents $(i,j)$, the adjacency weight is defined as
\begin{align*}
    \overline{a}_{ij} =
\begin{cases}
1, & \|x_i - x_j\| < d_c, \\
0, & \text{otherwise}.
\end{cases}
\end{align*}
The corresponding Laplacian matrix is given by 
\begin{align*}
    [\overline{L}]_{i,j}\triangleq 
    \left\{
    \begin{array}{ll}
     \sum_{j=1}^N \overline{a}_{i,j},    & i=j \\
     -\overline{a}_{i,j},    & i\neq j
    \end{array}\right.
\end{align*} We then compute the second smallest eigenvalue $\overline{\lambda}_2$ of $\overline{L}$, which characterizes the connectivity of the communication graph. As shown in Fig.~\ref{fig:connectivity}, the value of $\overline{\lambda}_2$ remains strictly positive throughout the entire process, indicating that the communication graph remains connected at all times. 
\begin{figure}
    \centering
    \includegraphics[width = 0.45\textwidth]{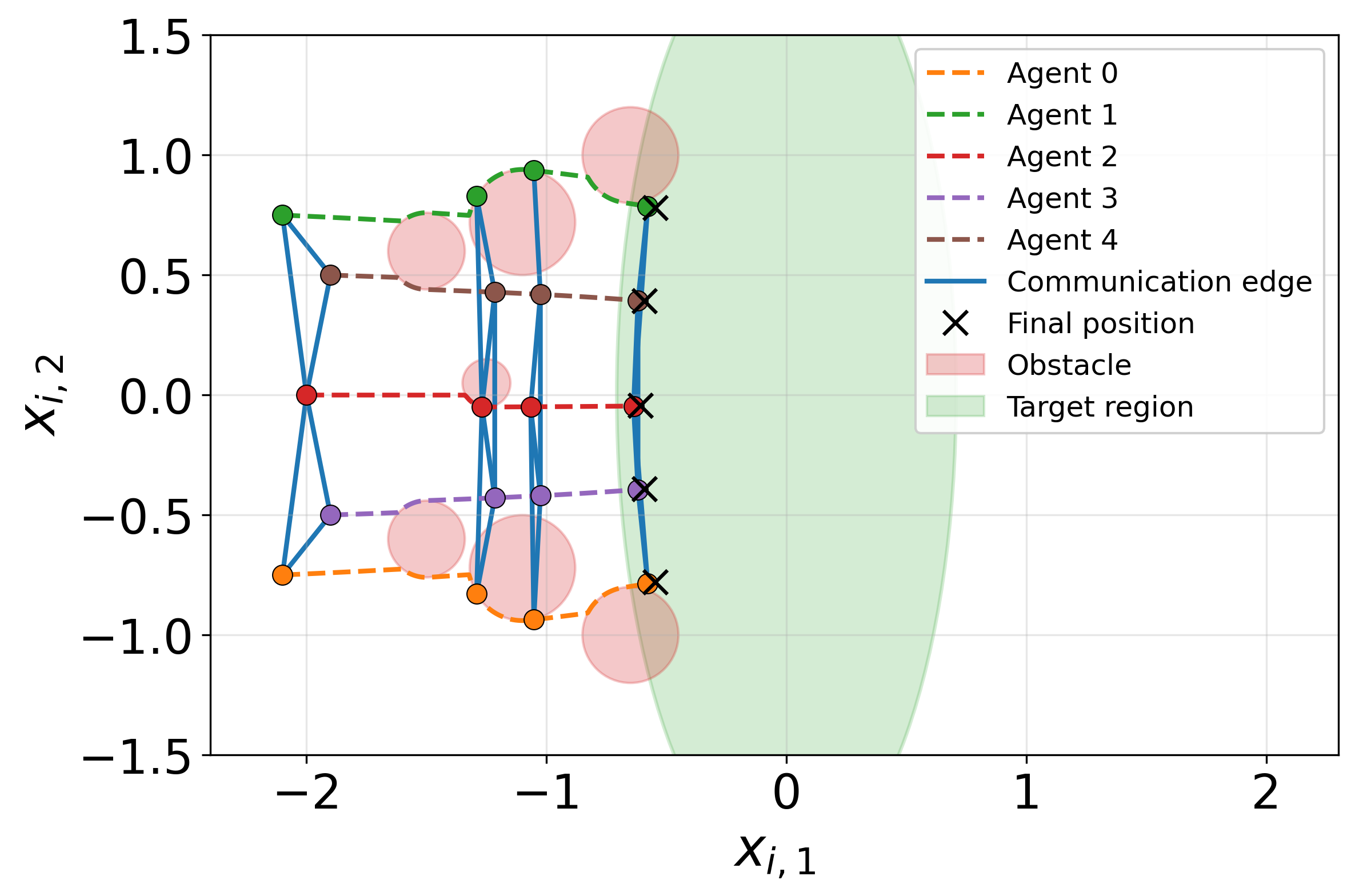}
    \caption{Trajectories of five agents and communication graphs at four selected time instants $t=0, 50, 75, 250$. Dashed curves denote agent trajectories, filled circles indicate agent positions at the selected instants, and crosses mark the final positions. Blue solid lines represent active communication edges. The green shaded region denotes the target region $\Omega_i$, and the light red circles represent obstacles. }       
    \label{fig:trajectories}
\end{figure}
\begin{figure}
    \centering
    \includegraphics[width = 0.45\textwidth]{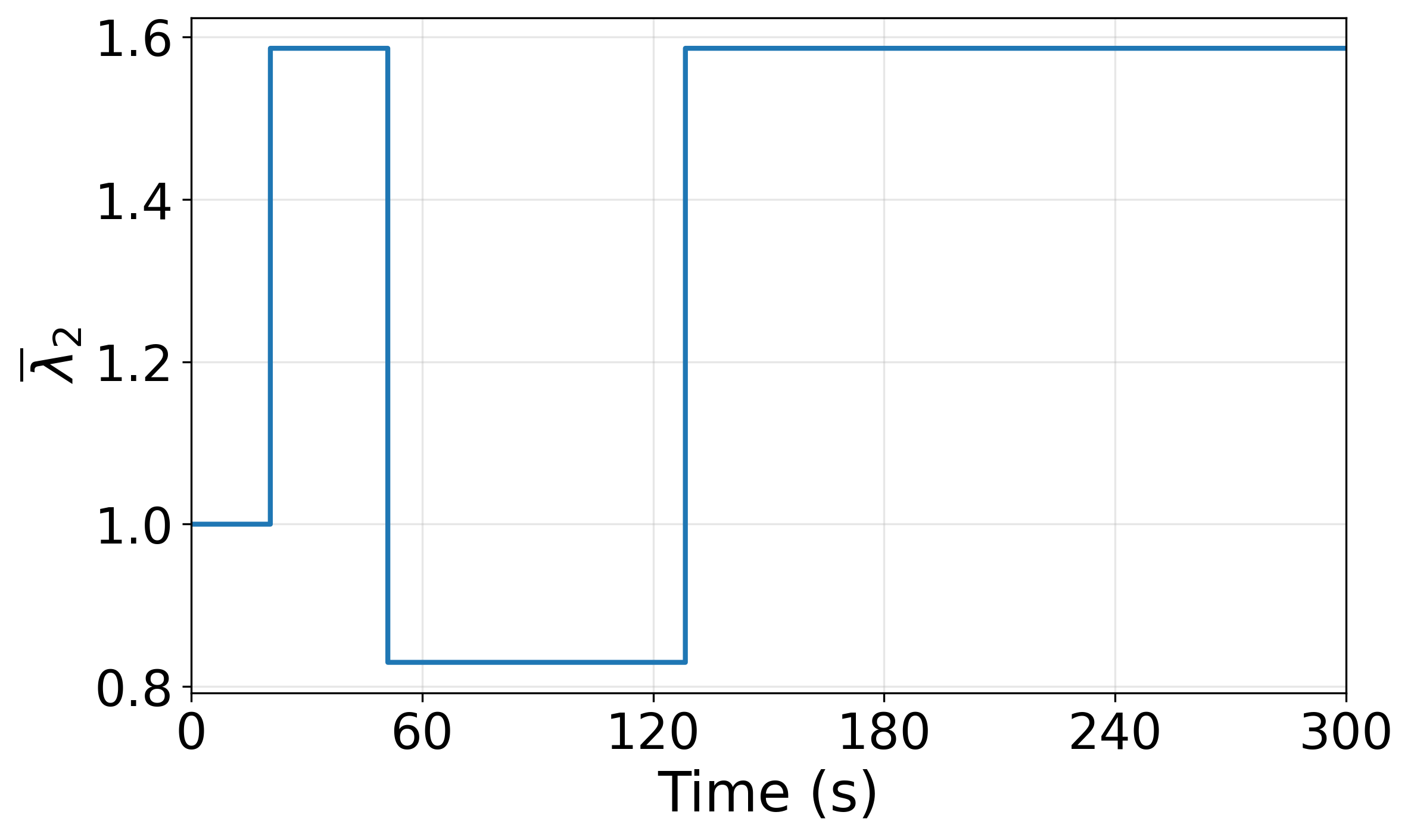}
    \caption{Evolution of the algebraic connectivity $\overline{\lambda}_2$ of the binary communication graph over time. The value of $\overline{\lambda}_2$ remains strictly positive throughout the process, indicating that the connectivity of the communication graph is maintained.}
    \label{fig:connectivity}
\end{figure}
\section{Conclusion}
\label{section:conclusion}

This paper presented a distributed optimization-based control framework for steering multi-agent systems to a target region while guaranteeing collision avoidance and communication connectivity under a state-dependent, time-varying communication topology. 
The framework decoupled globally coupled constraints through auxiliary mismatch variables evolved via two-time-scale dynamics, and a truncation function ensured that these variables remain well-defined as communication links appear or disappear. 
Using singular perturbation analysis, we established that the resulting distributed controller guarantees safety, connectivity preservation, and convergence to the target region. 
The framework was validated through numerical simulations on a multi-agent reach-avoid scenario with time-varying communication topologies. Simulation results confirm that all agents converge to the target region while safety and connectivity constraints are satisfied throughout the process.

\IEEEpeerreviewmaketitle

\bibliographystyle{IEEEtran}
\bibliography{MyBib}
\appendix

This appendix provides the technical details omitted from the main text. In particular, we present the construction of $g_i(u_i; \mathbf{z}^{(i)},\mathbf{x})$ in Appendix~\ref{appendix:function-construction}. Appendix~\ref{appendix:theorem3} contains the proof of Theorem \ref{theorem:clf-convergence}. Appendix C verifies Assumptions HF to H3 in Theorem \ref{:theorem-converge} through a sequence of auxiliary lemmas.

\subsection{Construction of $g_i(u_i; \mathbf{z}^{\{i\}},\mathbf{x})$}
\label{appendix:function-construction}
For brevity, we suppress the arguments and write $h_i$ for $h_i(x_i)$, $h_{i,j}$ for $h_{i,j}(x_i, x_j)$, $h_{\text{inter}}^{i,j}$ for $h_{\text{inter}}^{i,j}(x_i, x_j)$, and $h_{\text{obs}}^{i,k}$ for $h_{\text{obs}}^{i,k}(x_i)$. 
The constraints in \eqref{eq:local-problem-formulation} is defined as $g_i(u_i; \mathbf{z}^{\{i\}},\mathbf{x})=\Psi_i(\mathbf{x})\cdot u_i+{\Theta}_i(\mathbf{x})\cdot \mathbf{z}^{\{i\}}+\phi_i(\mathbf{x})$, 
where
\begin{align*}
    \Psi_i(\mathbf{x})
    \triangleq \bigg[ \text{col} &\bigg( 
        \{- h_{i,j}\cdot L_{G_i}h_{\text{inter}}^{i,j}\}_{ j\neq i}, \\
        &
         \{-L_{G_i}h_{\text{obs}}^{i,k}\}_{k\in[K]}, 
          - L_{G_i}h_c(\mathbf{x}),\\
          &
        L_{G_i}(h_iV_i(x_i))\bigg)^T,  
         I^m, 
         -I^m\bigg]^T,
    \end{align*}
\begin{align*}
    &\phi_i(\mathbf{x}) \triangleq \bigg[\text{col}\bigg( 
        \{- h_{i,j}\cdot \big(L_{F_i}h_{\text{inter}}^{i,j}+\frac{1}{2}\gamma_1(h_{\text{inter}}^{i,j})\big)\}_{j\neq i},\\
        &\{ - L_{F_i} h_{\text{obs}}^{i,k}- \gamma_2  (h_{\text{obs}}^{i,k})\}_{k\in [K]}, -L_{F_i}h_c(\mathbf{x})-\gamma_4\cdot \frac{1}{N}h_c(\mathbf{x}), \\
        &
         L_{F_i}(h_iV_i(x_i)) + \gamma_3(h_i V_i(x_i))\bigg)^T, 
         c \mathbf{1}^m, c \mathbf{1}^m\bigg]^T
    ,
    \end{align*}
\begin{align*}
    \Theta_i(\mathbf{x}) &= 
    \begin{bmatrix}
        D_h\cdot(I^{N-1}\otimes[1\ -1]) & 0^{(N-1)\times N} & 0^{(N-1)\times N} \\
        0^{K\times 2(N-1)} & 0^{K\times N} & 0^{K\times N} \\
        0^{1\times (2N-1)} & e_i^T\hat{L}_h(\mathbf{x}) & 0^{1\times N} \\
        0^{1\times (2N-1)} & 0^{1\times N} & e_i^T\tilde{L}_h(\mathbf{x})\\
        0^{2m\times 2(N-1)} & 0^{2m\times N} & 0^{2m\times N} 
    \end{bmatrix}
\end{align*}
with $D_h\triangleq \operatorname{diag}(\{h_{i,j}(x_i, x_j)\}_{j\neq i})$,
$e_i \in \mathbb{R}^N$ denotes the $i$-th canonical basis vector,
i.e., 
$e_i \triangleq 
[0,\dots,0,1,0,\dots,0]^T$, whose $i$-th entry equals one and all other entries are zero, 
\begin{align*}
    [\hat{L}_h]_{ij} =\left\{
    \begin{array}{ll}
     \sum_{j\neq i} h_{i,j}, & i=j,\\
- h_{i,j}, & i\neq j,
    \end{array}\right.
\end{align*}
\begin{align*}
    [\tilde{L}_h]_{ij} =\left\{
    \begin{array}{ll}
     \sum_{j\neq i} h_{i,j} h_i h_j, & i=j,\\
- h_{i,j} h_i h_j, & i\neq j.
    \end{array}\right.
\end{align*}
\subsection{Proof of Theorem 
\ref{theorem:clf-convergence}}
\label{appendix:theorem3}
\textit{Proof of Theorem \ref{theorem:clf-convergence}}: 
By Assumption~\ref{assumption:feasible}, a solution $\mathbf{u}(t)$ 
to \eqref{eq:problem-formulation} exists for all $t \geq t_0$. The controller $\mathbf{u}(t)$, as the solution to \eqref{eq:problem-formulation}, satisfies $-g_{\text{inter}}^{i,j}(u_i, u_j; x_i, x_j) \leq 0$ for all $(i,j) \in \mathcal{E}(\mathbf{x})$, $-g_{\text{obs}}^{i,k}(u_i; x_i) \leq 0$ for all $i \in \mathcal{V}$, $k \in [K]$, and $-g^c(\mathbf{u}; \mathbf{x}) \leq 0$. By Theorem~\ref{theorem:cbf}, the sets $\{(x_i, x_j): h_{\text{inter}}^{i,j}(x_i, x_j) \geq 0\}$ for all $(i,j) \in \mathcal{E}(\mathbf{x})$, $\{x_i: h_{\text{obs}}^{i,k}(x_i) \geq 0\}$ for all $i \in \mathcal{V}$, $k \in [K]$, and $\{\mathbf{x}: h_c(\mathbf{x}) \geq 0\}$ are all forward invariant, ensuring inter-agent safety and obstacle avoidance for all $t \geq t_0$. Moreover, since $\{\mathbf{x}: h_c(\mathbf{x}) \geq 0\} \subseteq \{\mathbf{x}: \lambda_2(\mathbf{x}) > 0\}$, it follows that $\lambda_2(\mathbf{x}(t)) > 0$ for all $t \geq t_0$, preserving connectivity of the communication graph.

In what follows, we prove the stability of the controller. Since $V(\mathbf{x})\geq 0$ and $\dot{V}(\mathbf{x})+\gamma_3 V(\mathbf{x})\leq 0$, the sublevel set $\overline{D} = \{\mathbf{x}\in \mathbb{R}^{nN}: V(\mathbf{x})\leq V(\mathbf{x}(t_0))\}$ is compact and positively invariant.
    We have $  \dot{V}(\mathbf{x})=0$ if and only if $V(\mathbf{x}) = 0$, i.e., $\mathbf{x}\in A^{\epsilon}$. Thus, set $A^{\epsilon}$ is the largest invariant set contained in $\{\mathbf{x}\in \overline{D}: \dot{V}(\mathbf{x})=0\}$. By LaSalle's invariance principle, we have $\lim_{t\rightarrow\infty}d_{A^{\epsilon}}(\mathbf{x}(t)) = 0$. Since $A^{\epsilon}\subset A$, there exists $T>0$ such that $\mathbf{x}(t)\in A$, $\forall t\geq T$, implying we have $\lim_{t\rightarrow\infty}d_{A}(\mathbf{x}(t)) = 0$. 

    Next, we will prove that the set $A^{\epsilon}$ is UGAS. We first prove that there exists $\psi(\cdot)$ as defined in Definition \ref{def:ugas}, such that for any $R>0$, any $\mathbf{x}(t_0)\in A^{\epsilon}+\overline{B}_R$, $d_{A^{\epsilon}}(\mathbf{x}(t))\leq \psi(R), \forall t\geq t_0$. We have that $ d_{\Omega_1^{\epsilon}}^2(x_1(t_0)) + \cdots + d_{\Omega_N^{\epsilon}}^2(x_N(t_0)) =d_{A^{\epsilon}}^2(\mathbf{x}(t_0)) \leq R^2$.
Based on the fact that $0\leq h_i(x_i)\leq 1$, we have 
    $V(\mathbf{x}(t))\leq \sum_{i=1}^{N}V_i(x_i(t))\leq \sum_{i=1}^{N}\beta_2 \cdot d_{\Omega_i^{\epsilon}}(x_i(t))$.    
    Therefore, we have  $V(\mathbf{x}(t))\leq  \sum_{i=1}^{N}\beta_2\cdot  d_{\Omega_i^{\epsilon}}(x_i(t_0))\leq \sum_{i=1}^{N}\beta_2\cdot \frac{R}{\sqrt{N}} = \beta_2\sqrt{N} R$.  
Define $\theta(d_{\Omega_i^{\epsilon}}(x_i(t))) = e^{-\frac{1}{\beta_1\cdot d_{\Omega_i^{\epsilon}}(x_i(t))}}\cdot \left(\beta_1\cdot d_{\Omega_i^{\epsilon}}(x_i(t))\right)$ with $d_{\Omega_i^{\epsilon}}(x_i(t))>0$. Based on the construction, $\theta(d_{\Omega_i}(x_i(t)))$ is a continuous function and strictly increasing in $d_{\Omega_i}(x_i(t))$. Therefore, we have $\theta(d_{\Omega_i^{\epsilon}}(x_i(t)))\leq e^{-\frac{1}{V_i(x_i(t))}}\cdot V_i(x_i(t))$.
Moreover, we have that $\theta(\cdot)$ is invertible and $\theta^{-1}(\cdot)$ is also a continuous and strictly increasing function. If $d_{A^{\epsilon}}(\mathbf{x}(t))>0$, based on the fact $\frac{d_{A^{\epsilon}}(\mathbf{x}(t))}{\sqrt{N}}\leq \max_{i\in \mathcal{V}}d_{\Omega_i^{\epsilon}}(x_i)$, we have
\begin{align*}
 &\theta(\frac{d_{A^{\epsilon}}(\mathbf{x})}{\sqrt{N}})\leq \theta(\max_{i\in \mathcal{V}}d_{\Omega_i^{\epsilon}}(x_i))\\
 &\quad\leq \sum_{i\in \{j\in \mathcal{V}:x_j\notin\Omega_i^{\epsilon}\}}\theta(d_{\Omega_j^{\epsilon}}(x_i))\leq V(\mathbf{x})\leq \beta_2\sqrt{N}R.
\end{align*}
This implies $\frac{d_{A^{\epsilon}}(\mathbf{x}(t))}{\sqrt{N}} \leq \theta^{-1}\left(\beta_2\sqrt{N}R\right)$.
Define $\psi(R) = \sqrt{N}\cdot \theta^{-1}\left(\beta_2\sqrt{N}R\right)$, we have $\psi(R)$ is a strictly increasing function in $R$ and $\lim_{R\rightarrow 0}\psi(R)=0$, completing the proof of the second part.

Finally, we prove the third part by showing for any $R>0$, any $\mathbf{x}(t_0)\in A^{\epsilon}+\overline{B}_R$, and any $\nu>0$, there exists $\mathcal{T}: \mathbb{R}_{+}\times 
    \mathbb{R}_{+}\rightarrow\mathbb{R}_{+}$ such that $d_{A^{\epsilon}}(x(t))\leq \nu, \forall t\geq \mathcal{T}(R, \nu)$. 
To characterize the convergence rate of agents toward the target region, we relate the Lyapunov function $V(\mathbf{x})$ to $d_{A^{\epsilon}}(\mathbf{x})$. In particular, we identify $c_1, c_2>0$ satisfying 
$$d_{A^{\epsilon}}(\mathbf{x})\leq R \implies V(\mathbf{x})\leq c_2,~V(\mathbf{x})\leq c_1\implies d_{A^{\epsilon}}(\mathbf{x})\leq \nu.$$
Hence, an upper bound on the time required for $V(\mathbf{x}(t))$ to decrease from $V(\mathbf{x}(t_0))$ to $c_1$ directly yields an upper bound on the time for agents to reach the state with $d_{A^{\epsilon}}(\mathbf{x})\leq \nu$. We first construct a constant $c_1$. For all $d_{\Omega_i^{\epsilon}}(x_i)>0$, we have $e^{-\frac{1}{\beta_1\cdot d_{\Omega_i^{\epsilon}}(x_i))}}\cdot \left(\beta_1\cdot d_{\Omega_i^{\epsilon}}(x_i)\right) \leq e^{-\frac{1}{V_i(x_i(t))}}\cdot V_i(x_i)$. This follows that  
\begin{align*}
    V(\mathbf{x}) &= \sum_{i=1}^{N}h_i(x_i)V_i(x_i) =\sum_{i\in \{j\in \mathcal{V}: x_j\notin \Omega_j^{\epsilon}\}} e^{-\frac{1}{V_i(x_i(t))}}\cdot V_i(x_i)\\
    &\geq \sum_{i\in \{j\in \mathcal{V}: x_j\notin \Omega_j^{\epsilon}\}}e^{-\frac{1}{\beta_1\cdot d_{\Omega_i^{\epsilon}}(x_i))}}\cdot \left(\beta_1\cdot d_{\Omega_i^{\epsilon}}(x_i)\right).
\end{align*}
If $V(\mathbf{x})\leq e^{-\frac{1}{\beta_1\cdot \nu/\sqrt{N}}}\cdot \left(\beta_1 \nu/\sqrt{N}\right)$, we have $e^{-\frac{1}{\beta_1\cdot d_{\Omega_i^{\epsilon}}(x_i))}}\cdot \left(\beta_1\cdot d_{\Omega_i^{\epsilon}}(x_i)\right) \leq e^{-\frac{1}{\beta_1\cdot \nu/\sqrt{N}}}\cdot \left(\beta_1 \nu/\sqrt{N}\right)$, $\forall i\in \{j\in \mathcal{V}: x_j\notin \Omega_j^{\epsilon}\}$, which implies $d_{\Omega_i^{\epsilon}}(x_i)\leq \nu/\sqrt{N}$, $\forall i\in \mathcal{V}$, implying $d_{A^{\epsilon}}(\mathbf{x})\leq \nu$. Therefore, we set $c_1 = e^{-\frac{1}{\beta_1\cdot \nu/\sqrt{N}}}\cdot \left(\beta_1 \nu/\sqrt{N}\right)$.

Next, we construct a constant $c_2$. Based on the fact that $0\leq h_i(x_i)\leq 1$  and $\beta_1 d_{\Omega_i^{\epsilon}}(x_i)\leq V_i(x_i)\leq \beta_2 d_{\Omega_i^{\epsilon}}(x_i) $, we have 
    $V(\mathbf{x})\leq \sum_{i=1}^{N}V_i(x_i)\leq \sum_{i=1}^{N}\beta_2 \cdot d_{\Omega_i^{\epsilon}}(x_i)$. 
By Cauchy–Schwarz, we have $\sum_{i=1}^{N}\beta_2  d_{\Omega_i^{\epsilon}}(x_i(t)) \leq \beta_2\sqrt{N} \sqrt{d_{\Omega_1^{\epsilon}}^2(x_1) + \cdots + d_{\Omega_N^{\epsilon}}^2(x_N)}  = \beta_2\sqrt{N} d_{A^{\epsilon}}(\mathbf{x})\leq  \beta_2\sqrt{N}R$.
It follows that 
\begin{align*}
    d_{A^{\epsilon}}(\mathbf{x})\leq R \implies V(\mathbf{x})\leq \beta_2\sqrt{N} d_{A^{\epsilon}}(\mathbf{x})\leq  \beta_2\sqrt{N}R = c_2.
\end{align*}
Therefore, we set $c_2 = \beta_2\sqrt{N}R$.
If $c_1\geq c_2$, by $\dot{V}(\mathbf{x})\leq 0$ for all $\mathbf{x}\in \{\mathbf{x}\in \mathcal{X}: V(\mathbf{x})\leq c_2\}$ and $V(\mathbf{x}(t_0))\leq c_2$, we have $V(\mathbf{x}(t))\leq c_1$ always holds, and  $d_{A^{\epsilon}}(\mathbf{x})\leq \nu$ holds for all time. Otherwise, if $c_1<c_2$, we define $\Gamma = \{\mathbf{x}: c_1\leq V(\mathbf{x})\leq c_2\}$, which is a compact set. Let $-\gamma(R, \nu) = \max_{\mathbf{x}\in \Gamma} \dot{V}(\mathbf{x})$, which satisfies $-\gamma(R, \nu)<0$. Defining 
$\mathcal{T}(R, \nu) \triangleq (c_2-c_1)/\gamma (R, \nu)$, 
we have that for all $t\geq \mathcal{T}(R, \nu)$, the system state satisfies $d_{A^{\epsilon}}(\mathbf{x}(t))\leq \nu$. We prove this result by contradiction.
Suppose that there exists a $t\geq \mathcal{T}(R, \nu)$, such that $d_{A^{\epsilon}}(\mathbf{x}(t))> \nu$. Then, by construction of $c_1$, it follows that $V(\mathbf{x}(t))>c_1$. However, we have
\begin{align*}
    &V(\mathbf{x}(t)) = V(\mathbf{x}(0)) + \int_{0}^{t}\dot{V}(\mathbf{x}(t))dt\\
    & \leq c_2+ \mathcal{T}(R, \nu)\dot{V}(\mathbf{x}(t))\leq c_2+\mathcal{T}(R, \nu)(-\gamma(R, \nu)) = c_1,
\end{align*}
which leads to a contradiction. Therefore, for all $t\geq \mathcal{T}(R, \nu)$, $d_{A^{\epsilon}}(\mathbf{x}(t))\leq \nu$. This completes the proof. \hfill$\blacksquare$

\subsection{Verification of Assumptions HF, H2, and H3}
\label{appendix:lemmas}
To establish Theorem \ref{theorem:safe-stable}, we first present several auxiliary results. By invoking Theorem \ref{:theorem-converge}, convergence follows once Assumptions HF, H1, H2, and H3 are verified for system \eqref{eq:two-scale-dynamics}. Assumptions HF, H2, and H3 are established through the following three lemmas, each corresponding to one assumption.
Define $\tilde{\mathbf{x}}=\text{col}(\mathbf{x}, \mathbf{w}, \mathbf{z}, \mathbf{y})$. The dynamics in \eqref{eq:two-scale-dynamics} admit the representation $\dot{\tilde{\mathbf{x}}} = \mathcal{F}(\tilde{\mathbf{x}})$.
\begin{lemma}
\label{lemma:hf}
    If Assumptions \ref{assumption:connectivity} to \ref{assumption:locally-lipschitz} hold, then \label{lemma:hf-assumption}
    given $\mathbf{\tilde{x}}$, $\{\mathcal{F}(\tilde{\mathbf{x}})\}$ has nonempty convex compact values and is L-Lipschitz. 
\end{lemma}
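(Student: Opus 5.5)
Since $J(\tilde{\mathbf{x}})=\{\mathcal{F}(\tilde{\mathbf{x}})\}$ is a singleton, it is automatically nonempty, convex, and compact. The real content is the Lipschitz/growth requirement in \textbf{HF}: I will show that $\mathcal{F}$ is locally Lipschitz in $\tilde{\mathbf{x}}$ and satisfies the linear growth bound $\|\mathcal{F}(\tilde{\mathbf{x}})\|\le L(\|\tilde{\mathbf{x}}\|+1)$. The plan is to check each block of \eqref{eq:two-scale-dynamics} separately.

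The blocks are handled as follows.

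\emph{Slow block.} The slow block is $F_i(x_i)+G_i(x_i)\overline{u}_i(\mathbf{x},\mathbf{z}^{\{i\}})$. Here $F_i$ and $G_i$ are locally Lipschitz by the system model, and $\overline{u}_i$ is locally Lipschitz by Assumption~\ref{assumption:locally-lipschitz}. It is well defined by Assumption~\ref{assumption:feasible-mismatch}, and computable locally by Assumptions~\ref{assumption:connectivity} and~\ref{assumption:compute_u_bar}. Since $\|\overline{u}_i\|_\infty\le c$ by \eqref{eq:19-f}, this block is bounded on compact sets in $\mathbf{x}$ uniformly in $\mathbf{z}$.

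\emph{$w$-block and $z$-blocks.} These blocks are bilinear in $(\mathbf{z},\mathbf{y})$ or $(\mathbf{w},\mathbf{y})$. Their coefficients are $\Psi_i(\mathbf{x})$, $h_{i,j}$, $h_i$, and $h_j$. The truncation functions $h_{i,j}$ and $h_i$ are $C^\infty$ and bounded by $1$. The key point is that $e^{-1/s}$ extends smoothly by $0$, so links appearing or disappearing produce no discontinuity. The Lie derivatives entering $\Psi_i$ and $\phi_i$ are $C^1$, since $h_{\text{inter}}^{i,j}$, $h_{\text{obs}}^{i,k}$, $h_c$, and $V_i$ are $C^2$ and $F_i,G_i$ are locally Lipschitz. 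Products and sums of locally Lipschitz maps are locally Lipschitz.

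\emph{$y$-block.} The $y$-block $[g_i(w_i,\mathbf{z}^{\{i\}};\mathbf{x})]^+_{y_i}$ is the main obstacle, because the projection $[\cdot]^+_{y}$ is discontinuous at $y^{(k)}=0$. My plan is to interpret the dynamics, as in \cite{cherukuri2017role}, through the Lipschitz map $y\mapsto \max\{0,y+g\}-y$, or equivalently through the projected vector field on the invariant set $\mathbb{R}^p_{\ge0}$. Solutions of the projected system are exactly the Carathéodory solutions generated by its convexified, single-valued realization. Under that realization the argument $g_i=\Psi_i w_i+\Theta_i\mathbf{z}^{\{i\}}+\phi_i$ is affine in $(w_i,\mathbf{z})$ with locally Lipschitz coefficients in $\mathbf{x}$, and the clipping map is $1$-Lipschitz, so the block is locally Lipschitz.

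To pass from local to the global constant $L$ demanded by \textbf{HF}, I use that trajectories remain in a compact set. First, $\mathbf{x}$ stays in a bounded region: the CLF constraint keeps $V$ nonincreasing, and $\overline{D}$ is compact as in Appendix~\ref{appendix:theorem3}. Second, the fast variables stay bounded, because the fast subsystem admits the nonincreasing function $W$ from Theorem~\ref{theorem:5-1}. On the resulting compact set the local Lipschitz constants are uniform. Outside it, the vector field can be modified (saturated) without affecting the trajectories, which yields a global $L$ together with the growth bound. I expect the projection step, and the justification that this modification is harmless, to need the most care; everything else is routine composition of Lipschitz functions.
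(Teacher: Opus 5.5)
Your skeleton matches the paper's proof: singleton values give nonempty convex compact sets, and each block of \eqref{eq:two-scale-dynamics} is locally Lipschitz by composition from $F_i,G_i$, Assumption~\ref{assumption:locally-lipschitz}, and the smoothness of the CBFs, $h_iV_i$ and $h_{i,j}$. You also rightly flag two things that proof passes over. First, its appeal to the ``Lipschitz continuity of the max operator'' ignores that $(a,b)\mapsto[a]^+_b$ is discontinuous at $b=0$. Second, local Lipschitzness gives no global $L$, and the growth bound really fails globally: in the simulation $\Psi_i$ contains $-2(x_i-s_k)^T$, so $\Psi_i^Ty_i$ and $\Psi_iw_i$ are quadratic.

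\textbf{Projection step.} Your repair here is wrong. The field $y\mapsto\max\{0,y+g\}-y$ is not equivalent to $[g]^+_y$. For a scalar component with $y=1$, $g=-3$, the former equals $-1$ and the latter $-3$. They coincide only where $y=0$ or $y+g\ge 0$, so they share equilibria and orthant invariance but generate different trajectories. Substituting the former changes the closed loop. The existence and stability of the fast equilibrium, and Lemma~\ref{lemma:h3}, all rest on Theorem~\ref{theorem:5-1}, which is specific to \emph{projected} dynamics, so they would have to be re-proved. Nor is there a single-valued convexification. At $y^{(k)}=0$, $g^{(k)}<0$ the Krasovskii regularization is the whole segment between $g^{(k)}$ and $0$, and the normal-cone form $g-N_{\mathbb{R}^p_{\ge 0}}(y)$ is unbounded. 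For $y^{(k)}>0$ it is just $\{g^{(k)}\}$, so the map is not Lipschitz in the Hausdorff metric. Hence, for the dynamics as written, $\mathcal{F}$ is discontinuous on the boundary of the dual orthant, and the Lipschitz part of HF cannot be obtained by composition arguments. You would need either a singular-perturbation result that tolerates projected (discontinuous) fast dynamics, or a Lipschitz dual update with H3 re-verified for it. For the growth part alone the projection is harmless, since $\|[g]^+_y\|\le\|g\|$.

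\textbf{Globalization step.} This step also has a gap. The function $W$ of Theorem~\ref{theorem:5-1} is nonincreasing only with $\mathbf{x}$ frozen. Along the coupled flow its derivative picks up a term driven by $\dot{\mathbf{x}}$ of indefinite sign, because the saddle point $\boldsymbol{\zeta}^\ast(\mathbf{x}(t))$ and the active set move. So $W$ gives no bound on $(\mathbf{w},\mathbf{z},\mathbf{y})$, let alone one uniform in $\tau$. Without a forward-invariant compact set for the full state, the saturation has nothing to act on. Even with one, you would have to check that the modified field still satisfies the global statements H2 and H3.

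The bound on $\mathbf{x}$ itself is fine, but you should say why: the mismatch terms in $g_i^{clf}$ cancel when summed over $i$, so $\dot V+\gamma_3V\le 0$ holds for every $\mathbf{z}$, not only $\mathbf{z}^\ast$. Finally, the paper assumes the CBFs are only $C^1$, so your $C^2$ hypothesis, which you need for $\Psi_i$ and $\phi_i$ to be locally Lipschitz, is an added assumption. Also, with $F_i,G_i$ merely locally Lipschitz, the Lie derivatives are locally Lipschitz, not $C^1$.
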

\begin{proof}
     The closed-loop dynamics  \eqref{eq:two-scale-dynamics} define a dynamical system. Assumptions \ref{assumption:compute_u_bar} and \ref{assumption:feasible-mismatch} guarantee that $\overline{u}_i$ exists and is instantly accessible at time $t$ for all $\mathbf{x}\in \mathbb{R}^{nN}$ and $\mathbf{z}^{\{i\}}\in \mathbb{R}^{4N-2}$,
     and hence $\mathcal{F}(\tilde{\mathbf{x}})$ exists.
     The set $\{\mathcal{F}(\tilde{\mathbf{x}})\}$ is a singleton set, hence, it has nonempty convex compact values.       
     Since a valid CBF is continuously differentiable, by Assumption \ref{assumption:connectivity}, we have  ${g}_{i}^{i,j}(\cdot)$, ${g}_{\text{obs}}^{i,k}(\cdot)$, and ${g}_{i}^{c}(\cdot)$ are locally Lipschitz functions in $(\mathbf{x}, \mathbf{z}^{\{i\}})$. Moreover, $h_i(x_i)V_i(x_i)$ is also twice continuously differentiable in $x_i$, hence, $\Psi_{i}(\mathbf{x})$ is locally Lipschitz in $\mathbf{x}$.    
    Considering the Lipschitz continuity of the $\max$ operator and that of $\overline{u}_i$, which is guaranteed by Assumption \ref{assumption:locally-lipschitz}, it follows that the condition in HF is satisfied.
\end{proof}
Define $(\mathbf{z}^{\{i\}})^{\ast} = \text{col}(\{(z_i^{i,j})^{\ast}, (z_j^{i,j})^{\ast}:j\neq i\}, (\mathbf{z}^{c})^{\ast}, (\mathbf{z}^{clf})^{\ast})$, which consists of equilibrium components of $\mathbf{z}^{\ast}$ corresponding to the mismatch variables communicated by agent $i$. Next, we show that the designed controller fulfills Assumption H2. 
\begin{lemma}\label{lemma:h2}
    There exists a nonempty compact set $A^{\epsilon}$ which is UGAS for the system $\dot{x}_i = F_i(x_i)+G_i(x_i)\overline{u}_i(\mathbf{x}, (\mathbf{z}^{\{i\}})^{\ast})$.
\end{lemma}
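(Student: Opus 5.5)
The plan is to reduce Lemma~\ref{lemma:h2} to Theorem~\ref{theorem:clf-convergence}. The key observation is that the reduced slow system with the fast variables frozen at their equilibrium is exactly the closed loop driven by the centralized controller.

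\textbf{Step 1: identify the fast equilibrium.} For fixed $\mathbf{x}$, the fast subsystem \eqref{eq:fast-1}--\eqref{eq:fast-last} is the projected saddle-point dynamics \eqref{eq:projected-saddle-dynamics} for the Lagrangian of the recast problem \eqref{eq:recast-problem-formulation}. Here $w_i$ plays the role of $u_i$, and $\mathbf{y}$ collects the multipliers. This uses that $\frac{1}{2}(\|\mathbf{u}\|^2+\xi\|\mathbf{z}\|^2)$ is strictly convex and the constraints are affine in $(\mathbf{u},\mathbf{z})$. By Theorem~\ref{theorem:5-1}, the unique equilibrium $\text{col}(\mathbf{w}^\ast(\mathbf{x}),\mathbf{z}^\ast(\mathbf{x}),\mathbf{y}^\ast(\mathbf{x}))$ is the KKT point. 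Hence $(\mathbf{w}^\ast,\mathbf{z}^\ast)$ is the optimizer $(\mathbf{u}^\ast,\mathbf{z}^\ast)$ of \eqref{eq:recast-problem-formulation}.

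\textbf{Step 2: recover the centralized controller.} Proposition~\ref{propisition-equivalence} says $\mathbf{u}^\ast(\mathbf{x})$ solves the centralized problem \eqref{eq:problem-formulation}. I then show $\overline{u}_i(\mathbf{x},(\mathbf{z}^{\{i\}})^\ast)=u_i^\ast(\mathbf{x})$. The argument runs as follows.
\begin{itemize}
\item The constraints of \eqref{eq:recast-problem-formulation} decouple across agents once $\mathbf{z}=\mathbf{z}^\ast$ is fixed.
\item The cost then separates into $\sum_i \frac{1}{2}\|u_i\|^2$ plus a constant.
\item Therefore $\mathbf{u}^\ast$ restricted to agent $i$ must minimize \eqref{eq:local-problem-formulation} with $\mathbf{z}^{\{i\}}=(\mathbf{z}^{\{i\}})^\ast$. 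Otherwise replacing it would lower the joint cost while keeping feasibility.
\end{itemize}
Uniqueness follows from strict convexity of each local problem. This is the identity $\overline{\mathbf{u}}(\mathbf{x},\mathbf{z}^\ast)=\mathbf{u}^\ast$ already noted in the text.

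\textbf{Step 3: apply the centralized result.} The reduced system $\dot{x}_i=F_i(x_i)+G_i(x_i)\overline{u}_i(\mathbf{x},(\mathbf{z}^{\{i\}})^\ast)$ is therefore exactly the closed loop of Theorem~\ref{theorem:clf-convergence}. Part 3 of that theorem gives that $A^{\epsilon}$ is UGAS, with the explicit $\psi(R)=\sqrt{N}\theta^{-1}(\beta_2\sqrt{N}R)$ and $\mathcal{T}(R,\nu)=(c_2-c_1)/\gamma(R,\nu)$. Well-posedness and local Lipschitzness of this vector field follow from Assumption~\ref{assumption:locally-lipschitz}, composing $\overline{u}_i$ with the locally Lipschitz map $\mathbf{x}\mapsto\mathbf{z}^\ast(\mathbf{x})$, and feasibility holds by Assumption~\ref{assumption:feasible}.

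\textbf{Step 4: nonempty and compact.} $A^{\epsilon}=\prod_i\Omega_i^{\epsilon}$ is a product of ellipsoids $\{x_i^T\Sigma x_i\le 1-\epsilon\}$ with $\Sigma\succ0$ and $0<\epsilon<1$. Each factor is nonempty (it contains $0$), closed and bounded, so $A^\epsilon$ is nonempty and compact.

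The main obstacle is Step 2, specifically the claim that the primal part of the saddle-point equilibrium coincides with the minimizer of \eqref{eq:recast-problem-formulation}. This requires strong duality or a constraint qualification so that KKT points equal optimizers, and feasibility from Assumption~\ref{assumption:feasible} alone is not enough. My first approach would be to invoke Slater's condition. If strict feasibility is not available directly, I would instead use the fact that all constraints are affine in $(\mathbf{u},\mathbf{z})$, so feasibility suffices for KKT under linear constraints. I would also check that the truncations $h_{i,j}$ do not break the equivalence of Proposition~\ref{propisition-equivalence}.
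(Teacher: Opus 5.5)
Your proposal is correct and follows the paper's proof. Both arguments identify the fast equilibrium with the optimizer of the recast problem, use $\overline{\mathbf{u}}(\mathbf{x},\mathbf{z}^{\ast})=\mathbf{u}^{\ast}$ together with Proposition~\ref{propisition-equivalence}, and then invoke Theorem~\ref{theorem:clf-convergence}. Your KKT argument for affine constraints, the separability argument, the Lipschitz composition with $\mathbf{z}^{\ast}(\mathbf{x})$, and the compactness of $A^{\epsilon}$ fill in details the paper leaves implicit.

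One side remark: the argument of Theorem~\ref{theorem:clf-convergence} only uses that the applied input satisfies the constraints of \eqref{eq:problem-formulation}. This holds for $\overline{\mathbf{u}}(\mathbf{x},\mathbf{z})$ with any $\mathbf{z}$, by summing the decomposed constraints, since the symmetric mismatch terms cancel. So the lemma does not actually depend on the exact equivalence you were worried about.
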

\begin{proof}
Given $\mathbf{z}^{\ast}$, we have $\overline{\mathbf{u}}(\mathbf{x}, \mathbf{z}^{\ast}) = \mathbf{u}^{\ast}$, and $(\overline{\mathbf{u}}(\mathbf{x},\mathbf{z}^{\ast}), \mathbf{z}^{\ast})$ is the optimal solution to \eqref{eq:recast-problem-formulation}. By Proposition \ref{propisition-equivalence}, we have $\overline{\mathbf{u}}(\mathbf{x}, \mathbf{z}^{\ast})$ as the optimizer of \eqref{eq:problem-formulation}. Based on the results in Theorem \ref{theorem:clf-convergence}, the set $A^{\epsilon}$ is UGAS for the system $\dot{x}_i = F_i(x_i)+G_i(x_i)\overline{u}_i(\mathbf{x}_{\mathcal{N}_i}, (\mathbf{z}^{\{i\}})^{\ast})$. 
\end{proof}
Finally, we verify Assumption H3, which ensures the required convergence of the closed-loop system trajectories to the target region. Let $\text{col}(\mathbf{w}_0, \mathbf{z}_0, \mathbf{y}_0)$ denote the initial state of fast variables $\text{col}(\mathbf{w}, \mathbf{z}, \mathbf{y})$ in \eqref{eq:two-scale-dynamics}. Define $E(\mathbf{x}) = \{\text{col}(\mathbf{w}^{\ast}(\mathbf{x}), \mathbf{z}^{\ast}(\mathbf{x}), \mathbf{y}^{\ast}(\mathbf{x}))\}$ and $\tilde{\mathbf{z}}(t) = \text{col}(\mathbf{w}(t), \mathbf{z}(t), \mathbf{y}(t))$.
Define the matrices
\begin{align*}
    \Psi(\mathbf{x}) &\triangleq \text{blkdiag}\!\big(\Psi_1(\mathbf{x}), \ldots, \Psi_N(\mathbf{x})\big)\\
    \Theta(\mathbf{x}) &\triangleq \text{blkdiag}\!\big(\Theta_1(\mathbf{x}), \ldots, \Theta_N(\mathbf{x})\big)
\end{align*}
where \(\text{blkdiag}(\cdot)\) denotes the block-diagonal concatenation of its matrix arguments. Also define
\begin{align*}
    \phi(\mathbf{x}) \triangleq \text{col}(\phi_1(\mathbf{x}), \ldots, \phi_N(\mathbf{x})).
\end{align*}
Then the stacked constraint vector can be written as
\begin{align*}
    g(\mathbf{w},\mathbf{z};\mathbf{x}) &\triangleq \text{col}\!\big(g_1(w_1; \mathbf{z}^{\{1\}},\mathbf{x}), \ldots, g_N(w_N; \mathbf{z}^{\{N\}},\mathbf{x})\big)\\
    &= \Psi(\mathbf{x}) \mathbf{w} + \Theta(\mathbf{x}) \mathbf{z} + \phi(\mathbf{x}).
\end{align*}
\begin{lemma}
    \label{lemma:h3}
    The set $E(\mathbf{x})$ is UGAS for the system \eqref{eq:fast-1} to \eqref{eq:fast-last}, with functions $\psi_{R_1}$ and $T_{R_1}$ defined as in Definition \ref{def:ugas}, such that, $\forall R_1>0$, $\forall \mathbf{x}\in \overline{B}_{R_1}$, $\forall r_1>0$, $\forall \text{col}(\mathbf{w}_0, \mathbf{z}_0, \mathbf{y}_0)\in E(\mathbf{x}) + \overline{B}_{r_1}$, $\forall \nu_1>0$, we have
    \begin{enumerate}
        \item $\lim_{t\rightarrow \infty}d_{E(\mathbf{x})}(\tilde{\mathbf{z}}(t))=0$
        \item $d_{E(\mathbf{x})}(\tilde{\mathbf{z}}(t))\leq \psi_{R_1}(r_1), \forall t\geq t_0$
        \item $d_{E(\mathbf{x})}(\tilde{\mathbf{z}}(t))\leq \nu, \forall t\geq T_{R_1}(R_1, \nu_1)$
    \end{enumerate}
\end{lemma}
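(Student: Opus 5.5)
The plan is to recognize that, for each frozen $\mathbf{x}$, the fast subsystem \eqref{eq:fast-1}--\eqref{eq:fast-last} is exactly the projected saddle-point dynamics \eqref{eq:projected-saddle-dynamics}, with time rescaled by $\tau$, for the strictly convex problem in the variables $(\mathbf{w},\mathbf{z})$:
\begin{align*}
\min_{\mathbf{w},\mathbf{z}}\ \tfrac12(\|\mathbf{w}\|^2+\xi\|\mathbf{z}\|^2)\quad \text{s.t.}\quad \Psi(\mathbf{x})\mathbf{w}+\Theta(\mathbf{x})\mathbf{z}+\phi(\mathbf{x})\leq 0 .
\end{align*}
Its Lagrangian is $\mathcal{L}=\tfrac12\|\mathbf{w}\|^2+\tfrac{\xi}{2}\|\mathbf{z}\|^2+\mathbf{y}^Tg(\mathbf{w},\mathbf{z};\mathbf{x})$. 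The first step is to check that $-\nabla_{\mathbf{w}}\mathcal{L}=-\mathbf{w}-\Psi^T\mathbf{y}$ and $-\nabla_{\mathbf{z}}\mathcal{L}=-\xi\mathbf{z}-\Theta^T\mathbf{y}$, and that these reproduce the $z$-equations. This uses the antisymmetric $[1\ -1]$ blocks and the Laplacian rows $e_i^T\hat L_h$, $e_i^T\tilde L_h$: the terms $h_{i,j}(y_i-y_j)$ come from the transpose of a Laplacian, and the Laplacian is symmetric. The remaining check is $\nabla_{\mathbf{y}}\mathcal{L}=g$, with the projection $[\cdot]^+_{\mathbf{y}}$. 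Since the cost is strictly convex, the constraints are affine, and the problem is feasible by Assumption~\ref{assumption:feasible-mismatch}, Theorem~\ref{theorem:5-1} applies. It gives existence and uniqueness of the saddle point $E(\mathbf{x})$ and global asymptotic stability, which proves item 1).

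For items 2) and 3), which require uniformity in $\mathbf{x}\in\overline B_{R_1}$, I would not rely on the discontinuous function $W$ of Theorem~\ref{theorem:5-1}. Instead I would use the standard quadratic distance function
\begin{align*}
U(\tilde{\mathbf{z}})=\tfrac12\|\tilde{\mathbf{z}}-\tilde{\mathbf{z}}^\ast(\mathbf{x})\|^2 .
\end{align*}
Convexity–concavity of $\mathcal{L}$ together with the projection property gives
\begin{align*}
\tau\dot U\leq \mathcal{L}(\mathbf{w}^\ast,\mathbf{z}^\ast,\mathbf{y})-\mathcal{L}(\mathbf{w},\mathbf{z},\mathbf{y}^\ast)-\tfrac12\|\mathbf{w}-\mathbf{w}^\ast\|^2-\tfrac{\xi}{2}\|\mathbf{z}-\mathbf{z}^\ast\|^2\leq 0 ,
\end{align*}
using strong convexity in $(\mathbf{w},\mathbf{z})$. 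Hence $\|\tilde{\mathbf{z}}(t)-\tilde{\mathbf{z}}^\ast\|\leq r_1$ for all $t$, so item 2) holds with $\psi_{R_1}(r_1)=r_1$, which is independent of $R_1$.

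For item 3) I need a uniform convergence time $T_{R_1}(r_1,\nu_1)$. The plan is a compactness argument. The set $K=\{(\mathbf{x},\tilde{\mathbf{z}}):\mathbf{x}\in\overline B_{R_1},\ \|\tilde{\mathbf{z}}-\tilde{\mathbf{z}}^\ast(\mathbf{x})\|\leq r_1\}$ is compact, since $\tilde{\mathbf{z}}^\ast$ is continuous by Assumption~\ref{assumption:locally-lipschitz}, and $K$ is forward invariant in $\tilde{\mathbf{z}}$ by item 2). On $K$, the strict decrease of $U$ toward the saddle point, combined with invariance-principle reasoning, gives convergence. Uniformity would then follow from continuous dependence of solutions on $(\mathbf{x},\tilde{\mathbf{z}}_0)$, which holds because the projected dynamics are a Lipschitz projected system with unique solutions, as in \cite{cherukuri2017role}. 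I would argue by contradiction: suppose sequences $(\mathbf{x}_k,\tilde{\mathbf{z}}_{0,k})\in K$ have solutions that stay outside the $\nu_1$-ball up to time $k$. Extract a convergent subsequence. The limit trajectory must then remain $\nu_1/2$-far from $E$ on arbitrarily long intervals, which contradicts item 1) for that limit point.

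The main obstacle is this uniform-in-$\mathbf{x}$ time bound. Continuous dependence for projected dynamics at the boundary $y=0$ is delicate. Also, the strong-convexity bound controls only the $(\mathbf{w},\mathbf{z})$ error, not the $\mathbf{y}$ error. If the compactness argument becomes messy, the alternative I would try first is to argue that once $(\mathbf{w},\mathbf{z})$ are close to $(\mathbf{w}^\ast,\mathbf{z}^\ast)$, the $\mathbf{y}$-error is forced small through the $\mathbf{w}$-equation $\mathbf{w}=-\Psi^T\mathbf{y}$ at steady state. That step would need a uniform rank/LICQ-type condition on $\Psi(\mathbf{x})$ over $\overline B_{R_1}$, which I would state as following from Assumption~\ref{assumption:locally-lipschitz}.
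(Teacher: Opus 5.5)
Your proposal is correct. Item 1) is handled as in the paper: you identify the fast subsystem with the projected saddle-point dynamics and invoke Theorem~\ref{theorem:5-1}. For items 2) and 3) you take a genuinely different route.

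The paper works with the Cherukuri-type function $W$, augmented by $\tfrac12 d_{E(\mathbf{x})}^2$. It bounds $W$ at $t_0$ using $\sup_{\overline{B}_{R_1}}\|\Psi\|$ and $\sup_{\overline{B}_{R_1}}\|\Theta\|$, which yields $\psi_{R_1}(r_1)=\sqrt{k_1(R_1)}\,r_1$. For item 3) it splits the compact annulus $\Gamma_w$ into active-set pieces $\Gamma_a$, asserts a uniform rate $\dot W<-\overline{\gamma}$ on them, and integrates to get $T_{R_1}$.

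You instead use the plain quadratic distance $U$. Its decrease $\tau\dot U\leq-\tfrac12\|\mathbf{w}-\mathbf{w}^\ast\|^2-\tfrac{\xi}{2}\|\mathbf{z}-\mathbf{z}^\ast\|^2$ gives $\psi(r_1)=r_1$, independent of $R_1$, and you never have to handle the discontinuities of $W$. Since $U$ gives no decay in $\mathbf{y}$, you get the uniform time by combining:
\begin{itemize}
\item sequential compactness of $K$;
\item monotonicity of $U$, which turns a late excursion into an excursion on all of $[t_0,t_0+k]$;
\item continuous dependence of solutions on $(\mathbf{x},\tilde{\mathbf{z}}_0)$;
\item continuity of $\tilde{\mathbf{z}}^\ast(\cdot)$.
\end{itemize}
Together these contradict item 1) at the limit pair. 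Your argument is non-constructive, but so in effect is the paper's, whose $\overline{\gamma}$ also comes from compactness. What yours buys is avoiding the paper's most delicate claim: a decay rate for a discontinuous $W$ that must hold uniformly over $\mathbf{x}\in\overline{B}_{R_1}$, on sets whose center $\tilde{\mathbf{z}}^\ast(\mathbf{x})$ moves with $\mathbf{x}$.

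The step you flag as the main obstacle is routine. In fast time, the Moreau decomposition writes the dynamics as $\dot{\tilde{\mathbf{z}}}=v(\tilde{\mathbf{z}};\mathbf{x})-n$. Here $v$ is the unprojected right-hand side, and $n$ lies in the normal cone at $\tilde{\mathbf{z}}$ of the convex set that is free in $(\mathbf{w},\mathbf{z})$ and equals $\mathbb{R}^M_{\geq 0}$ in $\mathbf{y}$. The normal-cone cross terms are nonpositive, so any two solutions satisfy $\tfrac{d}{dt}\tfrac12\|\tilde{\mathbf{z}}_1-\tilde{\mathbf{z}}_2\|^2\leq(\tilde{\mathbf{z}}_1-\tilde{\mathbf{z}}_2)^T\big(v(\tilde{\mathbf{z}}_1;\mathbf{x}_1)-v(\tilde{\mathbf{z}}_2;\mathbf{x}_2)\big)$. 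The field $v$ is affine in $\tilde{\mathbf{z}}$ with coefficients $\Psi,\Theta,\phi$ continuous in $\mathbf{x}$, and item 2) keeps trajectories bounded. Gronwall on $[t_0,\bar T]$ then gives exactly the continuity you need.

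Your LICQ fallback is therefore unnecessary. It also would not follow from Assumption~\ref{assumption:locally-lipschitz}, since Lipschitz continuity of the saddle map does not imply a rank condition on the active constraint rows.
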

\begin{proof}
Since the cost function is strictly convex and constraints are convex in Problem \eqref{eq:recast-problem-formulation}, we have that $E(\mathbf{x})$ is a singleton set. Statement (1) follows directly from Theorem~\ref{theorem:5-1}, which establishes that the equilibrium set $E(\mathbf{x})$ is asymptotically stable for the fast dynamics for every fixed value of the slow variable $\mathbf{x}$. Therefore,
$\lim_{t\to \infty} d_{E(\mathbf{x})}(\tilde{\mathbf{z}}(t)) = 0$.

Next, we verify statement (2) by establishing the uniform bound $d_{E(\mathbf{x})}(\tilde{\mathbf{z}}(t))\leq m_{R_1}(r_1), \forall t\geq t_0$. 
    Let $\hat{\mathbf{w}}\triangleq\text{col}(\mathbf{w}, \mathbf{z})$ with initial condition $\hat{\mathbf{w}}_0 = \text{col}(\mathbf{w}_0, \mathbf{z}_0)$, and $M = \sum_{i=1}^{N}M_i$. Define 
    \begin{align*}
        \mathcal{L}(\hat{\mathbf{w}}, \mathbf{y}) \triangleq \frac{(\|\mathbf{w}\|^2+\xi \|\mathbf{z}\|^2)}{2} + \mathbf{y}^T(\Psi(\mathbf{x})\mathbf{w}+\Theta(\mathbf{x})\mathbf{z}+\phi(\mathbf{x})).
    \end{align*}
    We set 
    \begin{align*}
        W(\hat{\mathbf{w}}, \mathbf{y})&\triangleq \frac{1}{2}(d_{E(\mathbf{x})}(\tilde{\mathbf{z}}(t)))^{2} + \frac{1}{2}\|\nabla_{\hat{\mathbf{w}}}\mathcal{L}(\hat{\mathbf{w}}, \mathbf{y})\|^2\\
        &+\frac{1}{2} \sum_{j \in\{1, \ldots, M\}\setminus \mathcal{J}}{((\nabla_{\mathbf{y}}\mathcal{L}(\hat{\mathbf{w}},\mathbf{y}))^{(j)})^{2}}
    \end{align*}
    where 
\begin{align*}
    \mathcal{J} &= \{j\in \{1, \ldots, M\}: y_j = 0 \text{\ and\ } (\nabla_{\mathbf{y} }F(\hat{\mathbf{w}}, \mathbf{y}))^{(j)}<0\}.
\end{align*}
In what follows, to simplify notation, we omit the explicit dependence on $\mathbf{x}$ and write ${\mathbf{w}}^{\ast}$, ${\mathbf{z}}^{\ast}$,  $\hat{\mathbf{w}}^{\ast}$ and ${\mathbf{y}}^{\ast}$ instead of ${\mathbf{w}}^{\ast}(\mathbf{x})$, ${\mathbf{z}}^{\ast}(\mathbf{x})$, $\hat{\mathbf{w}}^{\ast}(\mathbf{x})$, and ${\mathbf{y}}^{\ast}(\mathbf{x})$, respectively.
We have that $d_{E(\mathbf{x})} (\begin{bmatrix}
    \hat{\mathbf{w}}(t)\\ \mathbf{y}(t)
\end{bmatrix})^2 =\|\begin{bmatrix}
     \hat{\mathbf{w}}(t) -  \hat{\mathbf{w}}^{\ast}\\
    \mathbf{y}(t) - \mathbf{y}^{\ast}
\end{bmatrix}\|^2$, which implies
    $d_{E(\mathbf{x})} (\begin{bmatrix}
    \hat{\mathbf{w}}(t)\\ \mathbf{y}(t)
\end{bmatrix})^2 \leq 2W(\hat{\mathbf{w}}(t), \mathbf{y}(t))$.
Next, we derive an upper bound of $W(\hat{\mathbf{w}}, \mathbf{y})$.
Since $\text{col}(\hat{\mathbf{w}}^{\ast},\mathbf{y}^{\ast})$ is the saddle point of $ \mathcal{L}(\hat{\mathbf{w}}, \mathbf{y})$, we have that 
\begin{align*}
  &\nabla_{\hat{\mathbf{w}}}
    \mathcal{L}(\hat{\mathbf{w}}, \mathbf{y}) = \begin{bmatrix}
        \nabla_{{\mathbf{w}}}
    \mathcal{L}(\hat{\mathbf{w}}, \mathbf{y})^T\\
    \nabla_{{\mathbf{z}}}
    \mathcal{L}(\hat{\mathbf{w}}, \mathbf{y})^T
    \end{bmatrix}^T = \begin{bmatrix}
        (\mathbf{w} + {\Psi}(\mathbf{x})^T\mathbf{y})^T\\
        ( \xi\mathbf{z} + {\Theta}(\mathbf{x})^T\mathbf{y})^T
    \end{bmatrix}^T \\
    &=\begin{bmatrix}
        (\mathbf{w}- \mathbf{w}^{\ast}+ {\Psi}(\mathbf{x})^T(\mathbf{y}-\mathbf{y}^{\ast}))^T\\
         (\xi(\mathbf{z}-\mathbf{z}^{\ast}) + {\Theta}(\mathbf{x})^T(\mathbf{y}-\mathbf{y}^{\ast}))^T
    \end{bmatrix}^T.
\end{align*}
It follows that
\begin{align*}
  &\|\nabla_{\hat{\mathbf{w}}}
    \mathcal{L}(\hat{\mathbf{w}}, \mathbf{y})\|^2  \leq 2||\mathbf{w}-\mathbf{w}^{\ast}||^{2} + 2\xi^2||\mathbf{z}-\mathbf{z}^{\ast}||^2 \\
    &+ 2\left(\sup_{\mathbf{x}\in \overline{B}_{R_1}}{||\Psi(\mathbf{x})||^{2}}\right)||\mathbf{y}-\mathbf{y}^{\ast}||^{2} \\
    &+   2\left(\sup_{\mathbf{x}\in \overline{B}_{R_1}}{||\Theta(\mathbf{x})||^{2}}\right)||\mathbf{y}-\mathbf{y}^{\ast}||^{2}.
\end{align*}
Similarly, we have 
\begin{align*}
   &\sum_{j \notin \mathcal{J}}{((\nabla_{\mathbf{y}}\mathcal{L}(\hat{\mathbf{w}},\mathbf{y}))_{j})^{2}} \leq ||\nabla_{\mathbf{y}}\mathcal{L}(\hat{\mathbf{w}},\mathbf{y})||^{2} \\
                &= ||(\mathbf{w}-\mathbf{w}^{\ast})^{T}\Psi(\mathbf{x})+ (\mathbf{z}-\mathbf{z}^{\ast})^T\Theta(\mathbf{x})||^{2} \\
                & \leq 2\sup_{\mathbf{x}\in \overline{B}_{R_1}}{||\Psi(\mathbf{x})||^{2}}||\mathbf{w}-\mathbf{w}^{\ast}||^{2} \\
                &\quad\quad\quad\quad\quad\quad\quad+ 2\sup_{\mathbf{x}\in \overline{B}_{R_1}}{||\Theta(\mathbf{x})||^{2}}||\mathbf{z}-\mathbf{z}^{\ast}||^{2}. 
\end{align*}
By definition, $d_{E(\mathbf{x})} (\begin{bmatrix}
    \hat{\mathbf{w}}(t)\\ \mathbf{y}(t)
\end{bmatrix})^2 = ||{\mathbf{w}}-{\mathbf{w}}^{\ast}||^{2} + ||{\mathbf{z}}-{\mathbf{z}}^{\ast}||^{2}+||\mathbf{y}-\mathbf{y}^{\ast}||^{2}$.             Combining terms, we have 
            \begin{align*}
               &2W(\hat{\mathbf{w}}, \mathbf{y}) \leq \left(3 + 2\sup_{\mathbf{x}\in \overline{B}_{R_1}}{||\Psi(\mathbf{x})||^{2}}\right)||\mathbf{w}-\mathbf{w}^{\ast}||^{2} \\
               & + \left(1+2\xi^2 + 2\sup_{\mathbf{x}\in \overline{B}_{R_1}}||\Theta(\mathbf{x})||^{2}\right)||\mathbf{z}-\mathbf{z}^{\ast}||^2\\
               &+ \left(1+2\sup_{\mathbf{x}\in  \overline{B}_{R_1}}{||\Psi(\mathbf{x})||_{2}^{2}}+2\sup_{\mathbf{x}\in  \overline{B}_{R_1}}{||\Theta(\mathbf{x})||^{2}}\right)||\mathbf{y}-\mathbf{y}^{\ast}||^{2},          
            \end{align*} and we can choose 
            \begin{align*}
                k_1(R_1)& = \max\big\{3 + 2\sup_{\mathbf{x}\in \overline{B}_{R_1}}{||\Psi(\mathbf{x})||^{2}}, \\
                &1 + 2\xi^2 + 2\sup_{\mathbf{x}\in \overline{B}_{R_1}}||\Theta(\mathbf{x})||^{2}, \\
                &1+2\sup_{\mathbf{x}\in \overline{B}_{R_1}}{||\Psi(\mathbf{x})||^{2}}+2\sup_{\mathbf{x}\in  \overline{B}_{R_1}}{||\Theta(\mathbf{x})||^{2}}\big\}.
            \end{align*}
with $k_1(R_1)>0$. Besides, based on Theorem \ref{theorem:5-1}, we have $W(\hat{\mathbf{w}}(t_1), \mathbf{y}(t_1)) \geq W(\hat{\mathbf{w}}(t_2), \mathbf{y}(t_2))$ for all $t_0\leq t_1\leq t_2$.
 This implies
 \begin{align*}
     &d_{E(\mathbf{x})} (\begin{bmatrix}
    \hat{\mathbf{w}}(t)\\ \mathbf{y}(t)
\end{bmatrix})^2\leq 2 W(\hat{\mathbf{w}}, \mathbf{y})\leq 2W(\hat{\mathbf{w}}(t_0), \mathbf{y}(t_0))\\
&\quad\quad \leq k_1(R_1)\|\begin{bmatrix}
    \hat{\mathbf{w}}(t_0)-\hat{\mathbf{w}}^{\ast}(\mathbf{x})\\ \mathbf{y}(t_0)-\mathbf{y}^{\ast}(\mathbf{x})
\end{bmatrix}\|^2 =k_1(R_1)r_1^2.
 \end{align*}
 For fixed $R_1$, we set $\psi_{R_1}(r_1) = \sqrt{k_1(R_1)} r_1$, which is nondecreasing in $r_1$ and $ \lim_{r\rightarrow0}\psi_{R_1}(r_1)=0$.     
    Hence, the uniform bound is established, completing the proof of statement (2) in Lemma~\ref{lemma:h3}. 

Finally, we verify the uniform convergence-time condition by showing that, $\forall \nu_1>0$, there exists $T_{R_1}(R_1, \nu_1)$ such that $d_{E(\mathbf{x})}(\tilde{\mathbf{z}}(t))\leq \nu_1, \forall t\geq T_{R_1}(R_1, \nu_1)$.
 Since $W(\cdot)$ is nonincreasing in time, it suffices to show that there exists $T$ such that $W(\hat{\mathbf{w}}(t),\mathbf{y}(t)) \leq \frac{1}{2} \nu_1^{2}$ for all $t\geq T$, since this would imply 
that $d_{E(\mathbf{x})} (\begin{bmatrix}
    \hat{\mathbf{w}}(t)\\ \mathbf{y}(t)
\end{bmatrix}) \leq \sqrt{2W(\hat{\mathbf{w}}(t), \mathbf{y}(t))}\leq \nu_1$.
We have constructed $\psi_{R_1}(r_1) = \sqrt{k_1(R_1)}r_1$ with $d_{E(\mathbf{x})} (\begin{bmatrix}
    \hat{\mathbf{w}}(t)\\ \mathbf{y}(t)
\end{bmatrix}) \leq \sqrt{k_1(R_1)}r_1$.
            If $\nu_1^2\geq k_1(R_1)r_1^2$, we have $\forall T\geq t_0$, $d_{E(\mathbf{x})} (\begin{bmatrix}
    \hat{\mathbf{w}}(t)\\ \mathbf{y}(t)
\end{bmatrix})\leq \nu_1$ for all $t\geq T$.          
            Otherwise, if $\nu_1^2< k_1(R_1)r_1^2$, we define the set $\Gamma_w(R_1,\nu_1,r_1)$ by
            \begin{align*}
              &\Gamma_w(R_1,\nu_1,r_1)\\
              &\triangleq \bigg\{
            \begin{bmatrix}
    \hat{\mathbf{w}}\\ \mathbf{y}
\end{bmatrix}
            : \left\|\begin{bmatrix}
    \hat{\mathbf{w}}-\hat{\mathbf{w}}^{\ast}\\ \mathbf{y}-\mathbf{y}^{\ast}
\end{bmatrix}\right\|^{2} \in \left[{\nu_1^{2}}, k_{1}(R_1)r_1^{2}\right]\bigg\}.
            \end{align*}  
            We have that $\Gamma_w(R_1,\nu_1,r_1)$ is a compact set.
Define 
\begin{align*}
    \overline{W}(\hat{\mathbf{u}}, \mathbf{y})&=\frac{1}{2}(d_{E(\mathbf{x})}\tilde{\mathbf{z}}(t))^{2} + \frac{1}{2}(\|\nabla_{\hat{\mathbf{w}}}\mathcal{L}(\hat{\mathbf{w}}, \mathbf{y})\|^2\\
        &+ \frac{1}{2}\sum_{j \in\{1, \ldots, M\}\setminus \overline{\mathcal{J}}}{((\nabla_{\mathbf{y}}\mathcal{L}(\hat{\mathbf{w}},\mathbf{y}))^{(j)})^{2}}) 
\end{align*}
where
\begin{align*}
    \overline{\mathcal{J}} &= \{j\in \{1, \ldots, M\}: y_j = 0 \text{\ and\ } (\nabla_{\mathbf{y} }F(\hat{\mathbf{u}}, \mathbf{y}))^{(j)}\leq 0\}.
\end{align*}
We have $\overline{W}(\hat{\mathbf{w}}(t), \mathbf{y}(t)) = {W}(\hat{\mathbf{w}}(t), \mathbf{y}(t))$ for all $t\geq t_0$.
Based on statement (1) in Theorem \ref{theorem:5-1}, we have that ${W}(\hat{\mathbf{w}}(t), \mathbf{y}(t))$ is differentiable almost everywhere. This implies that when $\dot{W}(\hat{\mathbf{w}}(t), \mathbf{y}(t))$ exists, $\dot{\overline{W}}(\hat{\mathbf{w}}(t), \mathbf{y}(t))$ also exists, with $\dot{W}(\hat{\mathbf{w}}(t), \mathbf{y}(t)) = \dot{\overline{W}}(\hat{\mathbf{w}}(t), \mathbf{y}(t))$.
We have that $\nabla_{\mathbf{y}}\mathcal{L}(\hat{\mathbf{w}},\mathbf{y}) = \Phi(\mathbf{x})\mathbf{w} + \Theta(\mathbf{x}) \mathbf{z}+\phi(\mathbf{x})$, where $\Phi(\mathbf{x})$, $\Theta(\mathbf{x})$, and $\phi(\mathbf{x})$ are bounded. This implies $\nabla_{\mathbf{y}}\mathcal{L}(\hat{\mathbf{w}},\mathbf{y})$ exists for all $\mathbf{x}$ and is linear in $(\hat{\mathbf{w}}, \mathbf{y})$ for a fixed $\mathbf{x}$.

For a given $\mathbf{x}$, let $\mathcal{H}_i(\hat{\mathbf{w}}; \mathbf{x}) \triangleq (\nabla_{\mathbf{y}}\mathcal{L}(\hat{\mathbf{w}},\mathbf{y}))_i$.
We define $\overline{\mathcal{S}}_i(\hat{\mathbf{w}}; \mathbf{x}) \triangleq \mathcal{H}_i^{-1}((-\infty, 0])$ as the preimage of a closed set. Hence, it is closed.
For each $a\in 2^{\{1, \ldots, M\}}$, we define $\mathcal{S}_a(\hat{\mathbf{w}}, \mathbf{y}; \mathbf{x}) \triangleq \{(\hat{\mathbf{w}}, \mathbf{y}; \mathbf{x}): y_j = 0 \text{\ and\ } (\nabla_{\mathbf{y}}\mathcal{L}(\hat{\mathbf{w}},\mathbf{y}))_j\leq0, \forall j\in a\}$. This implies that $\mathcal{S}_a(\hat{\mathbf{w}}, \mathbf{y}; \mathbf{x}) = \cap_{i\in a}(\overline{\mathcal{S}}_i(\hat{\mathbf{w}}; \mathbf{x})\times \{\mathbf{y}: y_i=0\})$ is a closed set. 
Based on the fact that $\Gamma_w(R_1,\nu_1,r_1)$ is a compact set, we have that $\Gamma_a(\hat{\mathbf{w}}, {\mathbf{y}}; \mathbf{x}) \triangleq \mathcal{S}_a(\hat{\mathbf{w}}, {\mathbf{y}}; \mathbf{x})\cap \Gamma_w(R_1, \nu_1, r_1)$ is a compact set.
Therefore, for each nonempty set $\Gamma_{a}$, such that $\overline{W}$ is differentiable on $\Gamma_{a}$, there exists $\gamma_a(R, \nu, r)$, such that 
\begin{align*}
    &\dot{W}(\hat{\mathbf{w}}(t), \mathbf{y}(t))= \dot{\overline{W}}(\hat{\mathbf{u}}(t), \mathbf{y}(t))\\
    &< -\gamma_a (R_1, \nu_1, r_1), ~\forall\text{col}(\hat{\mathbf{w}}, {\mathbf{y}}; \mathbf{x})\in \Gamma_a(\hat{\mathbf{w}}, {\mathbf{y}}; \mathbf{x}).
\end{align*}
Define 
\begin{align*}
    -\overline{\gamma}(R_1, \nu_1, r_1) = \max_{a\in 2^{\{1, \ldots, M\}}\setminus \{b\in 2^{\{1, \ldots, p\}}: \Gamma_b = \emptyset\}    } -\gamma_a(R, \nu, r).
\end{align*}
For any $t\geq t_0$ such that $W(\cdot)$ is not differentiable at $t$, we have  $\lim_{t\uparrow t^{\prime}}{W}(\hat{\mathbf{w}}(t), \mathbf{y}(t))\geq {W}(\hat{\mathbf{w}}(t^{\prime}), \mathbf{y}(t^{\prime}))$.
Therefore, we have that 
\begin{align*}
    {W}(\hat{\mathbf{w}}(t), \mathbf{y}(t)) 
    &\leq {W}(\hat{\mathbf{w}}(t_0), \mathbf{y}(t_0)) - \int_{t_0}^T\overline{\gamma} (R_1, \nu_1, r_1) dt \\
    &={W}(\hat{\mathbf{w}}(t_0), \mathbf{y}(t_0)) -\overline{\gamma} (R, \nu, r)\cdot (T-t_0).\\
\end{align*}
In order to ensure that there exists $T$ such that ${W}(\hat{\mathbf{w}}(t), \mathbf{y}(t)) \leq \frac{1}{2}\nu_1^{2}$, a sufficient condition is 
\begin{align*}
    &{W}(\hat{\mathbf{w}}(t_0), \mathbf{y}(t_0)) -\overline{\gamma} (R_1, \nu_1, r_1)\cdot (T-t_0) \\
    &\leq k_2(R_1)r_1^2 - \overline{\gamma} (R_1, \nu_1, r_1)\cdot (T-t_0) \leq \frac{1}{2}\nu_1^{2}.
\end{align*}
This implies that for all $t\geq T_{R_1}$ with $T_{R_1}(r_1, \nu_1) = \frac{k_2(R_1)r_1^2 - \frac{1}{2}\nu_1^2}{\overline{\gamma}(R_1, \nu_1, r_1)}+t_0$, 
             there exists $\overline{\gamma}(R_1,\nu_1,r_1) > 0$ such that $\dot{W}(\hat{\mathbf{w}}(t), \mathbf{y}(t)) < -\overline{\gamma}(R_1,\nu_1,r_1)$ for all $[\hat{\mathbf{w}}^T,\mathbf{y}^T]^T \in \Gamma_w(R_1,\nu_1,r_1)$ and $\mathbf{x} \in \overline{B}_{R_1}$.
             We prove this result by contradiction. Suppose $t > T_{R_1}(r_1,\nu_1)$ and yet $W(\hat{\mathbf{u}}(t),\mathbf{y}(t)) > k_{1}(R_1)\nu_1^{2}$. We have for all $t^{\prime} \in [0,t]$, when $\dot{W}(\hat{\mathbf{w}}(t^{\prime}),\mathbf{y}(t^{\prime}))$ exists, $\dot{W}(\hat{\mathbf{w}}(t^{\prime}),\mathbf{y}(t^{\prime})) < -\overline{\gamma}(R_1,\nu_1,r_1)$ , and otherwise $\lim_{t\uparrow t^{\prime}}{W}(\hat{\mathbf{w}}(t), \mathbf{y}(t))\geq {W}(\hat{\mathbf{w}}(t^{\prime}), \mathbf{y}(t^{\prime}))$, so 
            \begin{align*}
            &W(\hat{\mathbf{w}}(t), \mathbf{y}(t)) \leq k_{2}(R_1)r_1^{2} - \overline{\gamma}(R_1,\nu_1,r_1)(t-t_0) \\
            &< k_{2}(R_1)r_1^{2} - \overline{\gamma}(R_1,\nu_1,r_1)(T_{R_1}(r_1,\nu_1)-t_0) \\
            &= k_{2}(R_1)r_1^{2}-(k_{2}(R_1)r_1^{2} - \frac{1}{2}\nu_1^{2}) = \frac{1}{2}\nu_1^{2},
            \end{align*}
            a contradiction. Hence, statement (3) follows. The set $E(\mathbf{x})$ is UGAS for the system \eqref{eq:fast-1} to \eqref{eq:fast-last}. This completes the proof.
\end{proof}

\end{document}